\documentclass[10pt,journal,compsoc]{IEEEtran}

 \usepackage{amssymb}
 \usepackage{amsmath,bbm}
 \usepackage{bm}
 \usepackage[linesnumbered,ruled,vlined]{algorithm2e}
 \usepackage{amsthm}
 \usepackage{enumerate}
      \usepackage{graphicx}
      \usepackage{epstopdf}
\usepackage{url}
\usepackage{color}
\usepackage{amssymb}

\usepackage{enumitem}
\usepackage{amsmath,textcomp,enumerate,bbm,latexsym}
\setlist[itemize,1]{leftmargin=1em,noitemsep,topsep=0pt}

\usepackage[colorlinks=true,
            citecolor=red,
            linkcolor=blue,
        anchorcolor=green,
            urlcolor=red]{hyperref}

\newtheorem{theorem}{Theorem}

\newtheorem{lemma}{Lemma}
\newtheorem{assumption}{Assumption}
\newtheorem{definition}{Definition}
\newtheorem{remark}{Remark}
\graphicspath{{images/}{../images/}}

%
\ifCLASSOPTIONcompsoc
  \usepackage[nocompress]{cite}
\else
  \usepackage{cite}
\fi

%
\ifCLASSINFOpdf
\else
\fi

\hyphenation{op-tical net-works semi-conduc-tor}

\begin{document}
%

\title{Online Job Scheduling with Redundancy and Opportunistic Checkpointing: \\ A Speedup-Function-Based Analysis}

\author{Huanle Xu,~\IEEEmembership{Member,~IEEE,}
       Gustavo de Veciana,~\IEEEmembership{Fellow,~IEEE,}
       \\  Wing Cheong Lau,~\IEEEmembership{Senior Member,~IEEE,}
       Kunxiao Zhou

\IEEEcompsocitemizethanks{\IEEEcompsocthanksitem Huanle Xu and Kunxiao Zhou are with the School of Computer Science and Network Security, Dongguan University of Technology, Dongguan, Guangdong.
E-mail: \{xuhl,zhoukx\}@dgut.edu.cn.
\IEEEcompsocthanksitem Gustavo de Veciana is with the Department of Electrical and Computer Engineering, The University of Texas at Austin, Austin, TX, USA. E-mail: gustavo@ece.utexas.edu. 
\IEEEcompsocthanksitem Wing Cheong Lau is with the Department of Information Engineering, The Chinese University of Hong Kong, Shatin, N.T., Hong Kong. E-mail: wclau@ie.cuhk.edu.hk.

}
\thanks{Part of this work has been presented in IEEE Infocom 2017.}
}

\IEEEtitleabstractindextext{
\begin{abstract}
In a large-scale computing cluster, the job completions can be substantially delayed due to two sources of variability, namely, variability in the job size and that in the machine service capacity. To tackle this issue, existing works have proposed various scheduling algorithms which exploit redundancy wherein a job runs on multiple servers until the first completes. In this paper, we explore the impact of variability in the machine service capacity and adopt a rigorous analytical approach to design scheduling algorithms using redundancy and checkpointing. 
We design several online scheduling algorithms which can dynamically vary the number of redundant copies for jobs. We also provide new theoretical performance bounds for these algorithms in terms of the overall job flowtime by introducing the notion of a speedup function, based on which a novel potential function can be defined to enable the corresponding competitive ratio analysis. In particular, by adopting the online primal-dual fitting approach, we prove that our SRPT+R Algorithm in a non-multitasking cluster is $(1+\epsilon)$-speed, $\ O(\frac{1}{\epsilon})$-competitive.  We also show that our proposed Fair+R and LAPS+R($\beta$) Algorithms for a multitasking cluster are $(4+\epsilon)$-speed, $\ O(\frac{1}{\epsilon})$-competitive and {($2 + 2\beta + 2\epsilon)$-speed $O(\frac{1}{\beta \epsilon})$-competitive} respectively.  We demonstrate via extensive simulations that our proposed algorithms can significantly reduce job flowtime under both the non-multitasking and multitasking modes. 
\end{abstract}
\begin{IEEEkeywords}
Online Scheduling, Redundancy, Optimization, Competitive Analysis, Dual-Fitting, Potential Function
\end{IEEEkeywords}
}


\maketitle

\IEEEdisplaynontitleabstractindextext
\IEEEpeerreviewmaketitle

\IEEEraisesectionheading{\section{Introduction}}

\IEEEPARstart{J}{ob} traces from large-scale computing clusters indicate that the completion time of jobs can vary substantially \cite{Outliers,grass}. This variability has two sources: variability in the job processing requirements and variability in machine service capacity. The job profiles in production clusters also become increasingly diverse as small latency-sensitive jobs coexist with large batch processing applications which take hours to months to complete \cite{spark-framework}.  With the size of today's computing clusters continuing to grow, component failures and resource contention have become a common phenomenon in cloud infrastructure \cite{trace-archive,failure-modeling}. As a result, the rate of machine service capacity may fluctuate significantly over the lifetime of a job. The same job may experience a far higher response time when executed at a different time on the same server \cite{redundacy_d}. These two dimensions of variability make efficient job scheduling for fast response time (also referred to as job flowtime) over large-scale computing clusters challenging. 

To tackle variability in job processing requirements, various schedulers have been proposed to provide efficient resource sharing among heterogeneous applications.  Widely deployed schedulers to-date include the Fair scheduler \cite{fair-scheduler} and the Capacity scheduler \cite{capacity-scheduler}. It is well known that the Shortest Remaining Processing Time scheduler (SRPT) is optimal for minimizing the overall/ average job flowtime \cite{SRPT} on a single machine in the clairvoyant setting, i.e., when job processing times are known a priori. As such, many works have aimed to extend SRPT scheduling to yield efficient scheduling algorithms in the multiprocessor setting with the objective of reducing job flowtimes for different systems and programming frameworks \cite{Flow_Shops,Schedulers,overlapping_phases,resource-packing}. Under SRPT, job's residual precessing times are known to the job scheduler upon arrival and smaller jobs are given priority. However, if only the distribution of job sizes is known, it is shown in \cite{gittins-index} that, Gittins index-based policy is optimal for minimizing the expected job flowtime under the Poission job arrivals in the single-server case. The Gittins index depends on knowing the service already allocated to each job and gives priority to the job with the highest index. 



To deal with component failures and resource contention, computing clusters are exploiting redundant execution wherein multiple copies of the same job execute on available machines until the first completes. With redundancy, it is expected that one copy of the same job might complete quickly to avoid long completion times. For the Google MapReduce system, it has been shown that redundancy can decrease the average job flowtime by 44\% \cite{mapreduce:google}. Many other cloud computing systems apply simple heuristics to use redundancy and they have proven to be effective at reducing job flowtimes via practical deployments, e.g., \cite{mapreduce:google,Cloning,hadoop,Dryad,Performance,Outliers,Smart_Speculative}.



Recently, researchers have started to investigate the effectiveness of scheduling redundant copies from a queuing perspective \cite{Ashish_redundant,chen_queuing_coding,zhan_replication,zhan_infocom,shah_redundant,redundacy_d}. These works assume a specific distribution of the job execution time where jobs follow the same distribution. However, they do not characterize the major cause leading to the variance of the job response time, namely, whether the variance is due to variability of job size or to variability in machine service capacity. In fact, if there is no variability in the machine service capacity, making multiple copies of the same job may not help and redundancy is a waste of resource. 

To overcome the aforementioned limitations, we have developed a stochastic framework in our previous work \cite{machine_variability_MapReduce} to explore the impact of variability in the machine service capacity. In this framework, the service capacity of each machine over time is modeled as a stationary process. To take full advantage of redundancy, \cite{machine_variability_MapReduce} allows checkpointing \cite{Pruyne_managingcheckpoints} to preempt, migrate and perform dynamic partitioning \cite{Squillante95onthe} on its running jobs. By checkpointing, we mean the runtime system of a cluster takes a snapshot of the state of a job in progress so that its execution can be resumed from that point in the case of subsequent machine failure or job preemption \cite{application-checkpointing}. Upon checkpointing, the state of the redundant copy which has made the most progress is propagated and cloned to other copies. In other words, all the redundant copies of a job can be brought to that most advance state and proceed to execute from this updated state.

A fundamental limitation of \cite{machine_variability_MapReduce} is that checkpointing needs to be done periodically when a job is being processed. Such a checkpointing mechanism would incur large overheads when the cluster size is large while the scheduler needs to make scheduling decisions quickly. To tackle this limitation, in this paper, we limit the total number of checkpointings for each job. Moreover, we only allow checkpointing to occur on a job only if there is an arrival to or departure from the system. As such, the resultant algorithms are more scalable and applicable to real world implementations. 



Most previous works studying job scheduling assume that clusters are working in the non-multitasking mode, i.e., each server (CPU Core) in the cluster can only serve one job at any time. However, multitasking is a reasonable model of current scheduling policies in CPUs, web servers, routers, etc \cite{resource-sharing,resource_sharing_cpu,cpu-sharing}. In a multitasking cluster, each server may run multiple jobs simultaneously and jobs can share resources with different proportions. In this paper, we will also study scheduling algorithms, which determine checkpointing times, the number of redundant copies between successive checkpoints as well as the fraction of resource to be shares in both of the multitasking and non-multitasking settings.

\subsection*{Our Results} 
For non-multitasking clusters, we propose the SRPT+R algorithm where redundancy is used only when the number of active jobs is less than the number of servers. For clusters allowing multitasking, we first design the Fair+R Algorithm, which shares resources near equally among existing jobs, with priority given to jobs which arrived most recently. We then extend Fair+R Algorithm to yield the LAPS+R($\beta$) Algorithm, which only shares resources amongst a fixed fraction of the active jobs.
In summary, this paper makes the following technical contributions: 
\begin{itemize}

\item \textit{New Framework.} We present the first optimization framework to address the job scheduling problem with redundancy, subject to limited number of checkpointings. Our optimization problems consider both the multitasking and non-multitasking scenarios. 

\item \textit{New Techniques.} We introduce the notion of speedup functions in both the multitasking and non-multitasking cases. Thanks to this, we develop a new dual-fitting approach to bound the competitive performance for both SRPT+R and Fair+R. Based on the speedup function, we also design a novel potential function accounting for redundancy to analyze the performance of LAPS+R($\beta$) in the multi-tasking setting. By changing the speedup function, one can readily apply our dual-fitting approach as well as the potential function analysis to other resource allocation problems in the multi-machine setting with/ without multitasking. 

\item \textit{New Results.} Under our optimization framework, SRPT+R achieves a much tighter competitive bound than other SRPT-based redundancy algorithms under different settings, e.g., \cite{machine_variability_MapReduce}. Moreover, LAPS+R($\beta$) is the first one to address the redundancy issue among those algorithms which work under the multitasking mode. 

\end{itemize}

The rest of this paper is organized as follows. After reviewing the related work in Section \ref{related_work}, we introduce our system model and optimization framework in Section \ref{system_model_section}. In Section \ref{no-resource-sharing}, we present SRPT+R and its performance bound in a non-multitasking cluster. We proceed to introduce the design and analysis for both Fair+R and LAPS+R($\beta$) under the multitasking mode in Section \ref{resource-sharing}.  Before concluding our work in Section \ref{conclusions}, we conduct several numerical studies in Section \ref{numerical-study} to evaluate our proposed algorithms.

\vspace{-.5em}
\section{Related Work}
\label{related_work}
In this{} section, we begin by giving a brief introduction to existing work on job schedulers. Then, we review the related work on redundancy schemes in large-scale computing clusters presented by priori research from the industry and academia. 

The design of job schedulers for large-scale computing clusters is currently an active research area \cite{Fast_completion,Flow_Shops,Joint_Phase,Joing_scheduling,overlapping_phases,Schedulers}. In particular, several works have derived performance bounds towards minimizing the total job completion time \cite{Fast_completion,Joing_scheduling,Joint_Phase} by formulating an approximate linear programming problem. By contrast, \cite{srpt-approximation} shows  that there is a strong lower bound on any online randomized algorithm for the job scheduling problem on multiple unit-speed processors with the objective of minimizing the overall job flowtime. Based on this lower bound, some works extend the SRPT scheduler to design algorithms that minimize the overall flowtimes of jobs which may consist of multiple small tasks with precedence constraints \cite{Flow_Shops,Schedulers,Joing_scheduling,overlapping_phases}. The above work was conducted in the clairvoyant setting, i.e., the job size is known once the job arrives. For the non-clairvoyant setting, \cite{dual-fitting-polyhedral,dual-fitting-migrate,round-robin} design several multitasking algorithms under which machines are allocated to all jobs in the system and priorities are given to jobs which arrive most recently. All of the above studies assume accurate knowledge of machine service capacity and hence do not address dynamic scheduling of redundant copies for a job.

Production clusters and big data computing frameworks have adopted various approaches to use redundancy for running jobs. The initial Google MapReduce system launches redundant copies when a job is close to its completion \cite{mapreduce:google}. Hadoop adopts another solution called LATE, which schedules a redundant copy for a running task only if its estimated progress rate is below certain threshold \cite{hadoop}. By comparison, Microsoft Mantri {\cite{Outliers}} schedules a new copy for a running task if its progress is slow and the total resource consumption is expected to decrease once a new redundant copy is made. 

Researchers have proposed different schemes to take advantage of redundancy via more careful designs. For example,  {\cite{Smart_Speculative}} proposes a smart redundancy scheme to accurately estimate the task progress rate and launch redundant copies accordingly. The authors in {\cite{Cloning}} propose to use redundancy for very small jobs when the extra loading is not high.  As an extension to {\cite{Cloning}}, they further develops GRASS \cite{grass}, which carefully schedules redundant copies for approximation jobs. Moreover, \cite{hopper_sigcomm} proposes Hopper to allocate computing slots based on the virtual job size, which is larger than the actual size. Hopper can immediately schedule a redundant copy once the progress rate of a task is detected to be slow. No performance characterization has been developed for these heuristics.

In our previous work, we have developed several optimization frameworks to study the design of scheduling algorithms utilizing redundancy \cite{huanle_task_cloning,speculative-multi-job}. The proposed algorithms in \cite{speculative-multi-job} require the knowledge of exact distribution of the task response time. We also analyze performance bounds of the proposed algorithm which extends the SRPT Scheduler in \cite{huanle_task_cloning} by adopting the potential function analysis. A fundamental limitation is that these resultant bounds are not scalable as they increase linearly with the number of machines. Recently,  \cite{gardner2016better}  proposes a simple model to address both machine service variability and job size variability. However, \cite{gardner2016better} only considers the FIFO scheduling policy on each server to characterize the average job response time from a queuing perspective.


Another body of research related to this paper focuses on the study of scheduling algorithms for jobs with intermediate parallelizability. In these works, e.g., \cite{SRPT-redundancy-sharing,scalably-scheduling,scaling-multiprocessor,Gupta10schedulingjobs,SRPT-transient}, jobs are parallelizable and the service rate can be arbitrarily scaled. In particular, Samuli \textit{et al.} present several optimal scheduling policies for different capacity regions in \cite{SRPT-transient} but for the transient case only. \cite{scalably-scheduling} \cite{scaling-multiprocessor} and \cite{Gupta10schedulingjobs} propose similar multitasking algorithms for jobs wherein priorities are given to jobs which arrive the most recently. These works develop competitive performance bounds with respect to the total job flowtime by adopting potential function arguments. \cite{SRPT-redundancy-sharing} also provides a competitive bound for the SRPT-based parallelizable algorithm in the  multitasking setting.  One limitation of \cite{SRPT-redundancy-sharing} is that the resultant bound is potentially very large. 
By contrast, this paper is motivated by the setting where there is variability in the machine service capacity.


For the analysis of SRPT+R algorithm in Section \ref{proof_dual_fitting_technique}, and Fair+R algorithm in \ref{Fair_R_algorithm}, we adopt the dual fitting approach. Dual fitting was first developed by \cite{dual-fitting,dual-fitting-non-linear} and is now widely used for the analysis of online algorithms \cite{dual-fitting-polyhedral,dual-fitting-migrate}. In particular, \cite{dual-fitting} and \cite{dual-fitting-polyhedral,dual-fitting-migrate} address linear objectives, and use the dual-fitting approach to derive competitive bounds for traditional scheduling algorith{}ms wi{}thout redundancy. By contrast, \cite{dual-fitting-non-linear} focuses on a convex objective in the multitasking setting. By comparison, this paper includes integer constraints associated with the non-multitasking mode. Moreover, our setting of dual variables is novel in the sense that it deals with the dynamical change of job flowtime across multiple machines where other settings of dual variables can only deal with the change of job flowtime on one single machine.

We apply the potential function analysis to bound the performance of LAPS+R($\beta$) in Section \ref{resource-sharing}. Potential function is widely used to derive performance bounds with resource augmentation for online parallel scheduling algorithms e.g., \cite{SRPT-redundancy-sharing,scalably-scheduling}. However, since we need to deal with redundancy and checkpointing, the design of our potential function is totally different from that in \cite{scalably-scheduling} and \cite{SRPT-redundancy-sharing} which only address sublinear speedup. 

While this paper adopts a framework similar to the one in \cite{machine_variability_MapReduce} to model machine service variability, it differs from \cite{machine_variability_MapReduce} in two major aspects. Firstly, the requirement of limiting the the total number of checkpointings results in a very different optimization problem which is much more difficult to solve from the one in \cite{machine_variability_MapReduce}. To tackle this challenge, in this paper, we adopt both the dual fitting approach and potential function analysis to make approximations and bound the competitive performance. By contrast, \cite{machine_variability_MapReduce} only applies the potential function analysis to derive performance bounds. Secondly, the current paper considers both the multi-tasking mode and non-multitasking mode to design corresponding online scheduling algorithms using redundancy. By contrast, the scheduling algorithms proposed in \cite{machine_variability_MapReduce} can only work under the non-multitasking mode.

\section{System Model}
\label{system_model_section}
Consider a computing cluster which consists of $M$ servers (machines) where the servers are indexed from $1$ to $M$. Job $j$ arrives to the cluster at time $a_j$ and the job arrival process, $(a_1,a_2,\cdots,a_N)$, is an arbitrary deterministic time sequence. In addition, job $j$ has a workload which requires $p_j$ units of time to complete when processed on a machine at {\em{unit}} speed. 
Job $j$ completes at time $c_j$ and its flowtime $f_j$, is denoted by $f_j = c_j - a_j$. In this paper, we focus on minimizing the overall job flowtime, i.e., $\sum_{j=1}^{N}f_j$. 

The service capacity of machines are assumed to be identically distributed random processes with stationary increments. To be specific, we let $S_i = (S_i(t)|t \geq 0)$ be a random process where $S_i(t,\tau) = S_i(\tau) - S_i(t)$ denote the {\em{cumulative}} service delivered by machine $i$ in the interval $(t,\tau]$. The service capacity of a machine has unit mean speed and a peak rate of $\Delta$, so for all $\tau > t \geq 0$, we have $S_i(t,\tau] \leq  (\tau - t) \cdot \Delta$ almost surely and $\mathbbm{E}\big[S_i(t,\tau]\big]=\tau - t$.



In this paper, our aim is to mitigate the impact of service variability by (possibly) varying the number of redundant copies with appropriate checkpointing. Checkpointing can make the most out of the allocated resources, i.e., start the processing of the possibly redundant copies at the most advanced state amongst the previously executing copies. In fact, we shall make the following assumption across the system:
\begin{assumption}
\label{assumption_1}
A job $j$ can be checkpointed only if there is an arrival to, or departure from, the system.

\end{assumption}

\begin{remark}
We refer to Assumption \ref{assumption_1} as a scalability assumption as it limits the checkpointing overheads in the system. 
\end{remark}


Below, we will first introduce a service model where each server can only serve one job at a time. In Section \ref{resource_sharing}, we will discuss a service model which supports multitasking, i.e., a server can execute multiple jobs simultaneously.

\subsection{Job processing in a Non-Multitasking Cluster}
\label{non_resource_sharing}
As illustrated in Fig.~\ref{system_model}, one can view the service process of job $j$ in a non-multitasking cluster by dividing its service period (from its arrival to its completion) into several subintervals, i.e., $\big\{(t^{k-1}_j,t^{k}_j]\big\}_k$ where $t^{k}_j$ denotes the time when the $k$th checkpointing of job $j$ occurs. The job arrival and completion times are also considered as checkpointing times, i.e., $t^{0}_j = a_j$ and $t^{L_j}_j = c_j$ if job $j$ experiences $(L_j + 1)$ checkpoints. During in $(t^{k-1}_{j},t^{k}_j]$, job $j$ is running on $r^k_j$ redundant servers. Thus, together $\bm{t_j} = \big(t^k_j \big|k=0,1,\cdots,L_j \big)$ and $\bm{r_j} = \big(r^k_j \big|k=1,2,\cdots,L_j \big)$ capture the checkpoint times and the scheduled redundancy for job $j$. 


\begin{figure}
\centering
\includegraphics[width=0.48\textwidth]{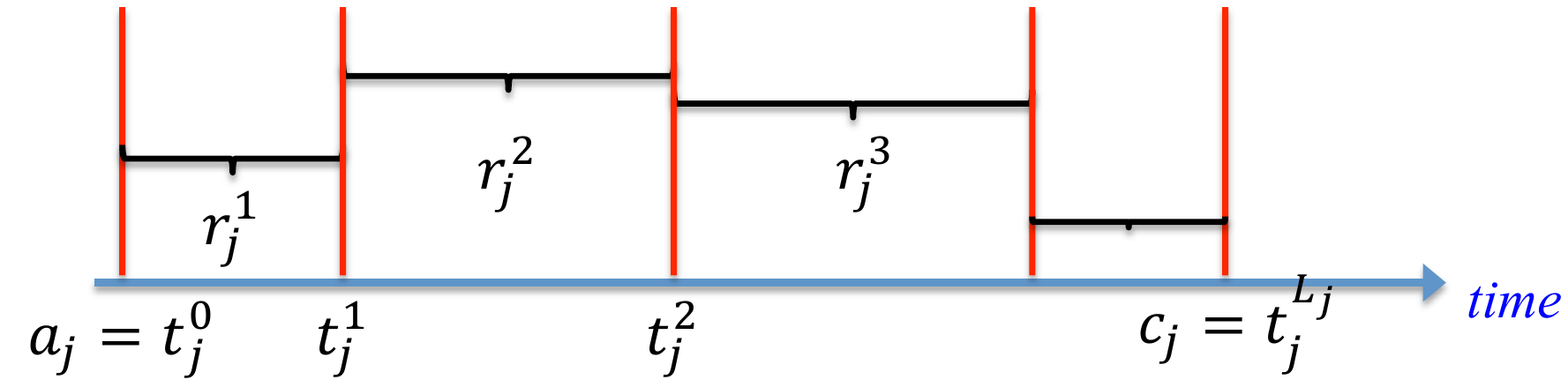}
\vspace{-1em}
\caption{The service process of job $j$.}
\label{system_model}
\vspace{-1.5em}
\end{figure}

We will let $g(r,t)$ denote the cumulative service delivered to a job on $r$ redundant machines and checkpointed at the end of an interval of duration $t$.  Clearly, $g(r,t)$ is equivalent to the amount of work processed by the redundant copy which has made the most progress. In this paper, we make the following assumption for $g(r,t)$: 
\begin{assumption}
\label{assumption_approximation}
We shall model (approximate) the cumulative service capacity under redundant execution, $g(r,t)$, by its mean, i.e.,
\begin{equation}
g(r,t) = \mathbbm{E}\Big[\max_{i=1,2,\cdots,r}  S_i(0,t]  \Big].
\end{equation}
\end{assumption}

\begin{remark}
Assumption \ref{assumption_approximation} essentially replaces the service capacity of the system with the mean but accounts for the mean gains one might expect when there are redundant copies executed.  
\end{remark}

The following lemmas illustrate two important properties of $g(r,t)$:

\begin{lemma}
\label{lemma_redundancy_concave}
For a fixed $t$, $\{g(r,t)\}_r$ is a concave sequence, i.e., $g(r,t) - g(r-1,t) \leq g(r-1,t) - g(r-2,t)$. 
\end{lemma}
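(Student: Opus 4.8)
The plan is to argue directly in terms of the random variables $X_i := S_i(0,t]$ for $i=1,\dots,r$. By the modeling assumptions the machines behave independently and the $S_i$ are identically distributed, so the $X_i$ are i.i.d., and since $0 \le X_i \le t\Delta$ almost surely they are bounded, hence all the expectations below are finite. Writing $M_k := \max_{1\le i\le k} X_i$, Assumption~\ref{assumption_approximation} gives $g(k,t)=\mathbbm{E}[M_k]$, so the claim is exactly that the increment $g(r,t)-g(r-1,t)=\mathbbm{E}[M_r]-\mathbbm{E}[M_{r-1}]$ is nonincreasing in $r$.

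First I would record the elementary pathwise identity $M_r=\max(M_{r-1},X_r)=M_{r-1}+(X_r-M_{r-1})^+$, which upon taking expectations yields
\begin{equation}
g(r,t)-g(r-1,t)=\mathbbm{E}\big[(X_r-M_{r-1})^+\big].
\end{equation}
The key second step is to put the previous increment into a matching form. Because $X_r$ is independent of $X_1,\dots,X_{r-1}$ and has the same law as $X_{r-1}$, the variable $\max(X_1,\dots,X_{r-2},X_r)$ has the same distribution as $M_{r-1}$; therefore
\begin{equation}
g(r-1,t)-g(r-2,t)=\mathbbm{E}[M_{r-1}]-\mathbbm{E}[M_{r-2}]=\mathbbm{E}\big[(X_r-M_{r-2})^+\big].
\end{equation}
Now both consecutive increments are expressed through the \emph{same} extra variable $X_r$, acting on $M_{r-1}$ in one case and on $M_{r-2}$ in the other.

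To conclude, I would invoke the pathwise inequality $M_{r-1}\ge M_{r-2}$ (adjoining one more term can only increase the running maximum) together with the fact that $x\mapsto (a-x)^+$ is nonincreasing for every fixed $a$; hence $(X_r-M_{r-1})^+\le (X_r-M_{r-2})^+$ everywhere on the sample space, and taking expectations gives $g(r,t)-g(r-1,t)\le g(r-1,t)-g(r-2,t)$, which is precisely the asserted concavity. The boundary case $r=2$ (reading $M_0\equiv 0$, i.e.\ $g(0,t)=0$) reduces to $\mathbbm{E}[\max(X_1,X_2)]\le 2\,\mathbbm{E}[X_1]$ and needs no separate argument.

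The only delicate point — and the step I would be most careful to justify cleanly — is the distributional identity in the second equation: one must legitimately replace $X_{r-1}$ by the independent fresh copy $X_r$ when rewriting $g(r-1,t)-g(r-2,t)$, so that the two increments live on a common probability space and are coupled through the same extra variable. Once that relabeling is in place, the rest is a one-line monotonicity comparison. An essentially equivalent alternative, if one prefers to avoid relabeling, is to appeal directly to the exchangeability of $(X_1,\dots,X_r)$; the content of the argument is the same.
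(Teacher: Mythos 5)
Your proof is correct, but it takes a genuinely different route from the paper. The paper's own argument is distributional: writing $F(x,t)$ for the common CDF of $S_i(0,t]$ and using independence to get $\Pr(\max_{i\le r}S_i(0,t]\le x)=F^r(x,t)$, it expresses $g(r,t)=\int_0^\infty(1-F^r(x,t))\,dx$, so that consecutive increments become $\int_0^\infty F^{r-1}(x,t)(1-F(x,t))\,dx$ and $\int_0^\infty F^{r-2}(x,t)(1-F(x,t))\,dx$, and concavity drops out from $F\le 1$ pointwise. You instead work pathwise: the identity $M_r=M_{r-1}+(X_r-M_{r-1})^+$ gives $g(r,t)-g(r-1,t)=\mathbbm{E}\big[(X_r-M_{r-1})^+\big]$, the relabeling $g(r-1,t)-g(r-2,t)=\mathbbm{E}\big[(X_r-M_{r-2})^+\big]$ is legitimate exactly as you say (the joint law of $(X_{r-1},M_{r-2})$ equals that of $(X_r,M_{r-2})$ under i.i.d., or indeed under mere exchangeability), and the comparison $M_{r-1}\ge M_{r-2}$ finishes the argument. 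What each approach buys: the paper's tail-integral computation is a compact one-line algebraic identity and also sets up the representation it reuses immediately in Lemma~\ref{lemma_redundancy_progress}; your coupling argument avoids the CDF machinery entirely, makes the source of concavity (diminishing pathwise overshoot of a fresh sample over a growing maximum) transparent, and is slightly more general in that exchangeability of the machine capacities suffices, whereas the factorization $F^r$ genuinely needs independence. Both, of course, rest on the implicit assumption that the machines' capacities are independent (or at least exchangeable), which the paper's proof uses without stating.
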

\begin{proof}
Let $H_r(0,t] = \max_{i=1,2,\cdots,r}S_i(0,t]$ and define $F(x,t)$ as the cumulative distribution function of random variable $S_i(0,t]$ for a fixed $t$. Thus, we have $\Pr(H_r(0,t] \leq x) = F^r(x,t)$ and $g(r,t) =  \mathbbm{E}\big[H_r(t)\big] = \int_{0}^{\infty} (1-F^r(x,t)) dx$,  which further implies that: 
\begin{equation}
\begin{split}
g(r,t) - g(r-1,t) & = \int_{0}^{\infty} F^{r-1}(x,t) \cdot (1-F(x,t)) dx \\  & \leq \int_{0}^{\infty} F^{r-2}(x,t) \cdot (1-F(x,t)) dx \\ & = g(r-1,t) - g(r-2,t).
 \end{split}
  \label{equation_concaviety_g}
\end{equation}
This completes the proof. 
\end{proof}

Lemma \ref{lemma_redundancy_concave} states that the marginal increase of the mean service capacity in the number of redundant executions is decreasing. 

\begin{lemma}
\label{lemma_redundancy_progress}
For all $r \in \mathbb{N}$ and $r \leq M$, $ {g(r,t)} \leq \min\{\Delta  t, rt\}$. 
\end{lemma}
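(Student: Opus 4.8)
The plan is to bound $g(r,t)$ against each of the two terms in the minimum separately, using the two defining properties of the service processes: the almost-sure peak-rate bound $S_i(0,t]\le \Delta t$ and the unit-mean property $\mathbbm{E}[S_i(0,t]]=t$, together with the (implicit) nonnegativity of cumulative service.

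First, for the $\Delta t$ bound I would simply observe that since $S_i(0,t]\le \Delta t$ almost surely for every $i$, the pointwise maximum satisfies $\max_{i=1,\dots,r} S_i(0,t]\le \Delta t$ almost surely; taking expectations gives $g(r,t)\le \Delta t$. Second, for the $rt$ bound I would use that a maximum of nonnegative quantities is dominated by their sum, i.e.\ $\max_{i=1,\dots,r} S_i(0,t]\le \sum_{i=1}^{r} S_i(0,t]$; taking expectations and using $\mathbbm{E}[S_i(0,t]]=t$ yields $g(r,t)\le \sum_{i=1}^{r}\mathbbm{E}[S_i(0,t]]=rt$. Combining the two inequalities gives $g(r,t)\le\min\{\Delta t, rt\}$.

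Alternatively, one could reuse the integral representation $g(r,t)=\int_{0}^{\infty}(1-F^{r}(x,t))\,dx$ established in the proof of Lemma~\ref{lemma_redundancy_concave}: since $F(\cdot,t)$ is the CDF of a random variable supported on $[0,\Delta t]$, the integrand vanishes for $x>\Delta t$, which gives the first bound, while the elementary inequality $1-y^{r}\le r(1-y)$ for $y\in[0,1]$ applied to $y=F(x,t)$ gives $g(r,t)\le r\int_{0}^{\infty}(1-F(x,t))\,dx = r\,\mathbbm{E}[S_1(0,t]] = rt$.

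There is essentially no obstacle here; the only point requiring a little care is to make explicit the nonnegativity of the increments $S_i(0,t]$, so that both the ``max $\le$ sum'' step and the support claim in the alternative argument are legitimate — this is implicit in modeling the cumulative service $S_i$ as a nondecreasing process with $S_i(0)=0$.
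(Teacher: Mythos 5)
Your proof is correct, but your main argument for the $rt$ bound is genuinely different from the paper's. The paper reuses the integral representation $g(r,t)=\int_0^\infty(1-F^r(x,t))\,dx$ from Lemma~\ref{lemma_redundancy_concave} and shows $\int_0^\infty(1-F^r)\,dx \ge r\int_0^\infty(1-F)F^{r-1}\,dx = r\bigl(g(r,t)-g(r-1,t)\bigr)$, which gives $g(r,t)\le \frac{r}{r-1}g(r-1,t)$ and, after iterating down to $g(1,t)=t$, yields $g(r,t)\le rt$; the $\Delta t$ bound is then obtained exactly as you do, from the almost-sure peak rate. Your primary route (``max $\le$ sum'' plus linearity of expectation) is shorter and, notably, does not rely on the product form $\Pr(H_r\le x)=F^r(x,t)$, i.e.\ it needs no independence across machines, whereas both the paper's argument and your alternative via $1-y^r\le r(1-y)$ do. What the paper's telescoping derivation buys in exchange is the intermediate relation $g(r,t)\le \frac{r}{r-1}g(r-1,t)$ linking consecutive redundancy levels (a byproduct of the same manipulation used for concavity in Lemma~\ref{lemma_redundancy_concave}), though that relation is not needed for the statement itself. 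Your remark about making the nonnegativity of the increments explicit is apt, since both the ``max $\le$ sum'' step and the support/CDF representation depend on it.
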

\begin{proof}
As shown in the proof of Lemma \ref{lemma_redundancy_concave}, $g(r,t) = \int_{0}^{\infty} (1-F^r(x,t)) dx$. Therefore, it follow that:
\vspace{-.5em}
\begin{equation}
\begin{split}
\int_{0}^{\infty} (1-F^r(x,t)) dx & = \int_{0}^{\infty} (1-F(x,t))\sum_{l=0}^{r-1}(F^l(x,t)) dx \\  & \geq r \int_{0}^{\infty} (1-F(x,t))F^{r-1}(x,t) dx \\ & = r(g(r,t) - g(r-1,t)).
 \end{split}
 \vspace{-.5em}
\end{equation}
which implies $g(r,t) \leq \frac{r}{r-1}g(r-1,t)$. Hence, $g(r,t) \leq rg(1,t)$. Moreover, we have $g(1,t) = \mathbbm{E}\big[S_i(0,t] \big] = t$. Thus, we have: 
\vspace{-.5em}
\begin{equation}
\label{proof_part_1_lemma_1}
g(n,t) \leq nt.
\vspace{-.5em}
\end{equation} 
Since $g_i(t) = S_i(0,t] \leq \Delta t$, it follows that:
\vspace{-.5em}
\begin{equation}
\label{proof_part_2_lemma_1}
\mathbbm{E}\Big[\max_{i=1,2,\cdots,n}\big\{ g_i(t) \big\}\Big] \leq \Delta t. 
\vspace{-.5em}
\end{equation}
The result follows from \eqref{proof_part_1_lemma_1} and \eqref{proof_part_2_lemma_1}. 
\end{proof}

Lemma \ref{lemma_redundancy_progress} states that the mean service capacity under redundant execution can grow at most linearly in the redundancy, $rt$, and is bounded by the peak service capacity of any single redundant copy, $\Delta t$. 

Given Assumption \ref{assumption_approximation}, the last checkpoint time for job $j$, $t^{L_j }_j$, is also the completion time $c_j$ and satisfies the following equation:
 \vspace{-.5em}
\begin{equation}
\label{job_completion_time_speedup}
\sum_{k=1}^{L_j} g(r_j^{k},t_j^{k} - t_j^{k-1}) = p_j.
 \vspace{-.5em}
\end{equation}

In the sequel, we shall also make use of the speedup function, $h_j(\bm{t_j},\bm{r_j},t)$, defined as follows:
\vspace{-.5em}
\begin{equation}
h_j(\bm{t_j},\bm{r_j},t) = \left\{\begin{array}{cc}
\frac{g(r_j^{k}, t_j^{k} - t_j^{k-1})}{t_j^{k} - t_j^{k-1}} & \ t \in (t_j^{k-1},t_j^{k}], \\
0 & \mbox{otherwise}.
\end{array}\right.
\vspace{-.5em}
\end{equation}
The speedup function captures the speedup that redundant execution is delivering in a checkpointing interval relative to a job execution on a unit speed machine. \eqref{job_completion_time_speedup} can be reformulated in terms of the speedup as follows:
 \vspace{-.5em}
\begin{equation}
\label{job_completion_time}
\int_{a_j}^{c_j} h_j(\bm{t_j},\bm{r_j},\tau) d\tau = p_j.
 \vspace{-.5em}
\end{equation}





\begin{remark}
Note that the speedup depends, not only on the number of redundant copies being executed, but also, on all the times when checkpointing occurs. In this sense, $h_j(\bm{t_j},\bm{r_j},t)$ is not a causal function. However, in the following sections, $h_j(\bm{t_j},\bm{r_j},t)$ will be a convenient notation to study competitive performance bounds for our proposed algorithms. 
\end{remark}

\subsection{Job processing in a Multitasking cluster}
\label{resource_sharing}
With multitasking, a server can run several jobs simultaneously and the service a job receives on a server is proportional to the fraction of processing resource it is assigned. 

We will model a cluster allowing multitasking as follows. Comparing with the service model in Subsection \ref{non_resource_sharing}, we include another variable $x_j^k$, to characterize the fraction of resource assigned to job $j$ in the $k$th subinterval, i.e., $(t^{k-1}_j,t^{k}_j]$. Here, we assume that job $j$ shares the same fraction of processing resource on all the machines on which it is being executed. Let $\bm{x_j} = \big(x^k_j \big|k=1,2,\cdots,L_j \big)$ and we define another speedup function, $\hat{h}_j(\bm{t_j},\bm{x_j},\bm{r_j},t)$, as follows:
\begin{equation}
\hat{h}_j(\bm{t_j},\bm{x_j},\bm{r_j},t) = \left\{\begin{array}{cc}
x^{k}_j \cdot h_j(\bm{t_j},\bm{r_j},t) & \ t \in (t_j^{k-1},t_j^{k}]\\
0 & \mbox{otherwise}
\end{array}\right.
\end{equation}

Paralleling \eqref{job_completion_time}, the completion time of job $j$, $c_j$ must satisfy the following equation:
\begin{equation}
\label{job_completion_time_sharing}
\int_{a_j}^{c_j} \hat{h}_j(\bm{t_j},\bm{x_j},\bm{r_j},\tau) d\tau = p_j 
\end{equation}

In the sequel, we will design and analyze algorithms under both the multitasking mode and the non-multitasking mode.

\subsection{Competitive Performance Metrics}
In this paper, we will study algorithms for scheduling, which involves determining checkpointing times, the number of redundant copies for jobs between successive checkpoints and in the multitasking setting the fraction of resource shares. Note that, when there is no variability in the machine's service capacity, our problem reduces to job scheduling on multiple unit-speed processors with the objective of minimizing the overall flowtime. This has been proven to be NP-hard even when preemption and migration are allowed and previous work \cite{speed,SRPT-redundancy-sharing} has adopted a resource augmentation analysis. Under such analysis, the performance of the optimal algorithm on $M$ unit-speed machines is compared with that of the proposed algorithms on $M$ $\delta$-speed machines where $\delta > 1$.  

The following definition characterizes the competitive performance of an online algorithm using resource augmentation. 

\begin{definition}
\cite{speed} An online algorithm is \textit{$\delta$-speed c-competitive} if the algorithm's objective is within a factor of $c$ of the optimal solution's objective when the algorithm is given $\delta$ resource augmentation.
\end{definition}

In this paper, we also adopt the resource augmentation setup to  bound the competitive performance of our proposed algorithms. With resource augmentation, the service capacity in each checkpointing interval under our algorithms is scaled by $\delta$.  Similarly, the value of the speedup functons, i.e., $h_j(\bm{t_j},\bm{r_j},t)$ and $\hat{h}_j(\bm{t_j},\bm{x_j},\bm{r_j},t)$, under our algorithms is $\delta$ times that under the optimal algorithm of the same variables.

\section{Algorithm Design in a Non-Multitasking Cluster}
\label{no-resource-sharing}
In a non-multitasking cluster, each server can only serve one job at any time.  Before going to the details of algorithm design, we first state the optimal problem formulation. For ease of illustration, we let ${\bm{y}_j} = (\bm{t_j},\bm{r_j},L_j)$ denote the checkpointing trajectory of job $j$ and $\bm{y} = (\bm{y}_j|j=1,2,\cdots,N)$ that for all jobs. Moreover, let $\mathbbm{1}({A})$ denote the indicator function that takes value 1 if $A$ is true and 0 otherwise. The optimal problem formulation is as follows:
\begin{align*}
  \min_{{\bm{y}}} &  \  \sum_{j=1}^{N}  (c_j - a_j)  \tag{OPT}  
  &\\
   \mbox{such that} 
  & \ \mbox{ (a), \ (b), \ (c), \  (d) are satisfied}
\end{align*}
\begin{itemize}
\item[(a)] Job completion: The completion time of job $j$, $c_j$, satisfies: $\int_{a_j}^{c_j}h_j(\bm{t_j},\bm{r_j},t)dt = p_j, \ \ \forall j$. 
\item[(b)] Resource constraint: The total number of redundant executions at any time $t \geq 0$ is no larger than the number of machines, $M$, i.e., $\sum_{j: a_j \leq t} \sum_{k=1}^{L_j}r_j^{k} \cdot \mathbbm{1}({t \in (t^{k-1}_j,t^{k}_j])} \leq M, \ \ \forall t$.  
\item[(c)] Checkpoint trajectory: The number of checkpoints for each job is between 2 and $2N$ since there are $2N$ job arrivals and departures, i.e., $L_j \in \{1,2,\cdots,2N-1\}$. The checkpoint times of job $j$, $\bm{t}_j$, satisfy: $\bm{t}_j \in \mathcal{T}_j^{L_j+1}$ where $\mathcal{T}_j^{L_j+1} = \big \{(t_0,t_1,\cdots,t_{L_j}) \in \mathbb{R}^{L_j+1}|a_j = t_0 < t_1 < \cdots < t_{L_j} = c_j \big\}$. Moreover, the number of redundant copies must be an integer, i.e., $\bm{r}_j \in \mathbb{N}^{L_j}$. 
\item[(d)] Checkpointing overhead constraint: Job checkpoints must satisfy Assumption 1, i.e., for $0 \leq k \leq L_j$, $t^k_j \in \{a_j\}_j \cup \{c_j\}_j$.
\end{itemize}

\vspace{.5em}
Since the OPT problem is NP-Hard, we propose to design a heuristic to schedule redundant jobs, i.e., SRPT+R, which is a simple extension of the SRPT scheduler \cite{SRPT}.

\subsection{SRPT+R Algorithm and the performance guarantee} 
Let $p_j(t)$ denote the amount of the unprocessed work for job $j$ at time $t$ and $n(t)$ denote the number of active jobs at time $t$. In this section, we will assume without loss of generality that jobs have been ordered such that $p_1(t) \leq p_2(t) \leq \cdots \leq p_{n(t)}(t)$. 

At a high level, the algorithm works as follows. When $n(t) \geq M$, the $M$ jobs with smallest $p_j(t)$, i.e., Job 1 to $M$ are each assigned to a server while the others wait. If $n(t) < M$, the job with the smallest $p_j(t)$, i.e., Job 1, is scheduled on $M- \lfloor\frac{M}{n(t)}\rfloor (n(t)-1)$ machines and the others are scheduled on $\lfloor\frac{M}{n(t)}\rfloor$ machines each. Here, $\lfloor x \rfloor$ represents the largest integer which does not exceed $x$. 

The corresponding pseudo-code is exhibited as Algorithm \ref{SRPT+R}. 
\begin{algorithm}[!h]
\label{SRPT+R}
\caption{SRPT+R Algorithm}
\While{A job arrives at or departure from the system}
{
Sort the jobs in the order such that $p_1(t) \leq p_2(t) \leq \cdots \leq p_{n(t)}(t)$ and count the number of redundant copies being executed for job $j$, $r_j$ \;
Initialize $M(t)$ to be the set of idle machines \;
\If{$n(t) < M$}
{
\For{$j=1,2,\cdots,n(t)$}
{\If{$j=1$}
{
  $r_j(t) = M - (n(t)-1)\lfloor \frac{M}{n(t)} \rfloor$\;
}
\Else{$r_j(t) = \lfloor \frac{M}{n(t)} \rfloor$\;}
Checkpoint job $j$ and assigns its redundant executions to $r_j(t)$ machines which are uniformly chosen at random from $\{1,2,\cdots,M\}$\;
}
}
\If{$n(t) \geq M$}
{\For{$j=1,2,\cdots,n(t)$}
{\If{$j \leq M$}{Checkpoint job $j$ and assign it to one machine which is uniformly chosen at random from $\{1,2,\cdots,M\}$\;}
\Else{Checkpoint job $j$\;}
}
}


}
 \vspace{-.5em}
\end{algorithm}
\vspace{-.2em}

Our main result, characterizing the competitive performance of SRPT+R, is given in the following theorem:  

\begin{theorem}
\label{theorem_1}
SRPT+R is $(1+\epsilon)$-speed $O(\frac{1}{\epsilon})$-competitive with respect to the total job flowtime. 
\end{theorem}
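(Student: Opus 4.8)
\emph{Proof proposal.} I would prove Theorem~\ref{theorem_1} by the online primal--dual (dual-fitting) method, adapted to the speedup-function formulation. \textbf{Step 1 (LP relaxation of (OPT)).} For any feasible schedule let $x_j(t)\ge 0$ denote the instantaneous rate at which useful work is performed on job $j$ at time $t$, i.e.\ the value of its speedup function $h_j(\bm{t_j},\bm{r_j},t)$. Constraint~(a) of (OPT) becomes the completion constraint $\int_{a_j}^{\infty}x_j(t)\,dt\ge p_j$. For capacity I would use Lemma~\ref{lemma_redundancy_progress}: since $g(r,t)\le rt$ and $\sum_k r_j^k\le M$ by constraint~(b), we get $\sum_{j:\,a_j\le t}x_j(t)\le M$ for every $t$; since $g(r,t)\le\Delta t$, we also get the per-job cap $x_j(t)\le\Delta$. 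The integrality of $\bm{r_j}$, the checkpoint-count bounds, and Assumption~\ref{assumption_1} (constraints (c)--(d)) are simply dropped, which only lowers the optimum. As objective I would take the standard linear surrogate of fractional flowtime, $\sum_j\int_{a_j}^{\infty}\big(\tfrac{t-a_j}{p_j}+\tfrac12\big)x_j(t)\,dt$, which is within a universal constant factor of $\sum_j f_j$. Dualizing, with multipliers $\alpha_j\ge 0$ for completion and $\beta_t\ge 0$ for capacity (and a third family for the $\Delta$-cap, which I suppress here), yields the dual $\max\ \sum_j\alpha_j-M\!\int\!\beta_t\,dt$ subject to $\alpha_j\le (t-a_j)+\tfrac{p_j}{2}+p_j\beta_t$ for every job $j$ and every $t\ge a_j$.

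\textbf{Step 2 (dual solution read off from SRPT+R).} Run SRPT+R with $(1+\epsilon)$-speed; let $n^A(t)$ be the number of jobs alive at $t$, and for a job $j$ let $n^A_{\preceq j}(t)$ be the number of alive jobs whose remaining work at $t$ is at most that of $j$. I would set $\beta_t=\tfrac{n^A(t)}{(1+\epsilon)M}$ on the over-loaded set $\{t:n^A(t)\ge M\}$ and $\beta_t=0$ elsewhere, so that $M\!\int\!\beta_t\,dt\le\tfrac{1}{1+\epsilon}\cdot\mathrm{ALG}$ where $\mathrm{ALG}=\sum_j f_j$ under SRPT+R; and $\alpha_j=\tfrac{1}{1+\epsilon}\int_{a_j}^{c_j^A}\tfrac{n^A_{\preceq j}(t)}{\min\{n^A(t),M\}}\,dt$, the rank-based charge familiar from dual-fitting analyses of SRPT, normalized by the number of machines actually doing useful work on jobs of priority at least that of $j$. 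A double-counting/aggregation argument over $t$ then shows $\sum_j\alpha_j-M\!\int\!\beta_t\,dt\ge c\,\epsilon\cdot\mathrm{ALG}$ for an absolute constant $c>0$: the subtracted capacity term cancels at most a $\tfrac{1}{1+\epsilon}$-fraction of the over-loaded contribution to $\sum_j\alpha_j$, while the residual $\tfrac{\epsilon}{1+\epsilon}$-fraction, together with the entire under-loaded contribution, survives.

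\textbf{Step 3 (dual feasibility).} This is the crux: fix $j$ and $t\ge a_j$ and verify $\alpha_j\le(t-a_j)+\tfrac{p_j}{2}+p_j\beta_t$. I would split $[a_j,t]$ into the over-loaded portion ($n^A(\cdot)\ge M$) and the under-loaded portion ($n^A(\cdot)<M$); their durations partition $[a_j,t]$. On the over-loaded portion SRPT+R assigns the $M$ jobs of smallest remaining work one machine each, so it runs exactly as SRPT on $M$ speed-$(1+\epsilon)$ machines, and the classical estimate bounds the corresponding part of $\alpha_j$ by (its duration) $+\,p_j\beta_t$ with the $\tfrac{\epsilon}{1+\epsilon}$ slack that the augmentation provides. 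On the under-loaded portion \emph{every} alive job is run redundantly on at least one machine, so by Lemma~\ref{lemma_redundancy_progress} its speedup is $g(r,\cdot)/\cdot\ge g(1,\cdot)/\cdot=1$; with augmentation job $j$ drains at rate $\ge 1+\epsilon$ there, so that portion has duration at most $\tfrac{p_j}{1+\epsilon}$ and, since $n^A_{\preceq j}(t)/n^A(t)\le 1$ on it, contributes at most $O(p_j)$ to $\alpha_j$ --- absorbed by the $\tfrac{p_j}{2}$ and $p_j\beta_t$ terms after the standard factor-$2$ bookkeeping. Concatenating the two portions gives the constraint; weak duality then yields $\mathrm{ALG}\le O(1/\epsilon)\cdot(\text{LP})\le O(1/\epsilon)\cdot\mathrm{OPT}$.

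\textbf{Main obstacle.} I expect the hardest parts to be: (i) phrasing the capacity constraint correctly --- it must bound \emph{useful progress} (the speedup) rather than machine-time, its validity for (OPT) resting on $g(r,t)\le\min\{rt,\Delta t\}$ and its near-tightness for SRPT+R in the under-loaded regime resting on $g(r,t)\ge t$; and (ii) the dual-feasibility check across the threshold $n^A(t)=M$, because the priority order by remaining work --- hence the set counted by $n^A_{\preceq j}(t)$ --- can change precisely as jobs cross that threshold, so the rank charge must be telescoped with care and the two regimes glued without loss of more than a constant. A lesser but real issue is absorbing the non-causal, integer-valued, checkpoint-constrained nature of a true schedule (Assumption~\ref{assumption_1}, constraints (c)--(d)) into the relaxation so that the clean LP above is genuinely a lower bound on (OPT). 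The remainder is the bookkeeping standard to dual-fitting proofs of SRPT-type guarantees.
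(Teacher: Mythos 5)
Your overall architecture (a $\Delta$-dependent fractional-flowtime relaxation whose dual has constraints of the form $\alpha_j \le (t-a_j) + c\,p_j + p_j\beta_t$ for all $t\ge a_j$, followed by fitting duals from an SRPT-type run and invoking weak duality) is the same as the paper's, and your Step 1 is essentially the paper's P1/P2 (note only that the surrogate objective is within a $\Delta$-dependent, not universal, constant of $\sum_j f_j$, exactly as in the paper's $(1+2\Delta)$ factor). The genuine gap is in Steps 2--3: your dual assignment is infeasible, and its aggregate value does not dominate $\epsilon\cdot\mathrm{ALG}$. The charge $\alpha_j=\frac{1}{1+\epsilon}\int_{a_j}^{c_j^A} n^A_{\preceq j}(t)/\min\{n^A(t),M\}\,dt$ bills job $j$ for the delay it \emph{experiences} over its whole lifetime, including delay caused by smaller jobs arriving after $a_j$; but the constraint must hold for every $t\ge a_j$, in particular at $t=a_j$, where the right-hand side $p_j/2+p_j\beta_{a_j}$ is determined solely by the state at $a_j$. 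Take $M=1$, a job $j$ with $p_j=1$, and a long stream of tiny jobs arriving after $a_j$ that keeps $j$ preempted for time $T$: then $\alpha_j=\Theta(T)$ while the right-hand side at $t=a_j$ is $O(1)$, so feasibility fails for large $T$, and no constant-factor bookkeeping in Step 3 can repair it, since $\alpha_j$ is a single number that must beat the \emph{smallest} right-hand side over $t$. The aggregation claim also fails: at times with $M\le n^A(t)\le 2M-2$ the density contributed to $\sum_j\alpha_j$ is $\frac{n(n+1)}{2M(1+\epsilon)}$, which is below the density $\frac{n}{1+\epsilon}$ of $M\int\beta_t\,dt$, so an instance keeping $n^A(t)\approx M$ drives the dual objective negative instead of giving $\ge c\,\epsilon\,\mathrm{ALG}$.

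The paper avoids both problems by charging at arrival, and to a pure SRPT run rather than to SRPT+R: $\alpha_j$ is (after dividing by $(1+\epsilon)p_j$) the increase in SRPT's total \emph{remaining} flowtime caused by $j$'s arrival, an explicit expression in the residual sizes $p_k(a_j)$ and the floors $\lfloor (n(a_j)-k)/M\rfloor$ obtained from the transient flowtime formula (Lemma \ref{srpt_optimality_transient}), with $\beta(t)=n(t)/((1+\epsilon)M)$. Because this $\alpha_j$ depends only on the state at $a_j$, feasibility reduces to a case analysis on how many of the jobs present at $a_j$ have completed by time $t$ (their processed work is absorbed into $(t-a_j)/p_j$, the survivors into $\beta(t)+2$), and the dual objective evaluates exactly to $\frac{\epsilon}{1+\epsilon}\,SRPT\ge\frac{\epsilon}{1+\epsilon}\,SR$, which with the relaxation bound gives the theorem. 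If you replace your $\alpha_j$ by this arrival-time marginal charge (the delay $j$ inflicts on jobs with larger residual work present at $a_j$, plus its own completion contribution), the rest of your outline, including the use of $g(r,t)\ge t$ in the underloaded regime and weak duality, goes through along the lines you describe.
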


We will prove Theorem \ref{theorem_1} by adopting the online dual fitting approach. The first step is to formulate a minimization problem which serves as an approximation to the optimal cost, $OPT$ with a guarantee that the cost of the approximation is within a constant of $OPT$. We then formulate the dual problem for the approximation and exploit the fact that a feasible solution to this dual problem gives a lower bound on its cost, which in turn is a constant times the cost of the proposed algorithm.

\begin{remark}
 It is worth to note that, when there is no machine service variability, SRPT+R performs exactly the same as the traditional SRPT algorithm on multiple machines. As a result, our proposed dual fitting framework can also show that SRPT is $(1+\epsilon)$-speed, $(3 + \frac{3}{\epsilon})$ competitive with respect to the overall job flowtime. When given small resource augmentation where $\epsilon \leq \frac{1}{3}$, our result improves the recent result in \cite{SRPT}, which states, SRPT on multiple identical machines is $(1+\epsilon)$-speed, $\frac{4}{\epsilon}$-competitive in terms of the overall job flowtime. 
\end{remark}

\vspace{-.5em}

\subsection{Proof of Theorem \ref{theorem_1}}
\label{proof_dual_fitting_technique}
To prove Theorem \ref{theorem_1}, we shall first both approximate the objective of OPT and relax Constraint (d) in OPT to obtain the following problem P1:
\vspace{-.5em}
\begin{align*}
  \min_{{\bm{y}}} &  \  \sum_{j=1}^{N}  \int_{a_j}^{\infty}\frac{({t-a_j + 2{p_j}})}{p_j} \cdot h_j(\bm{t_j},\bm{r_j},t)dt  \tag{P1} &\\
   \mbox{s.t.} 
  &  \  \int_{a_j}^{\infty}h_j(\bm{t_j},\bm{r_j},t)dt \geq p_j, \ \ \forall j, &\\
  &  \  \sum_{j: a_j \leq t} \sum_{k=1}^{L_j}r_j^{k} \cdot \mathbbm{1}({t \in (t^{k-1}_j,t^{k}_j])} \leq M, \ \ \forall t, &\\
  &   \  L_j \in \{1,2,\cdots,2N-1\},  \ \ \bm{t}_j \in \mathcal{T}_j^{L_j+1}, \ \ \bm{r}_j \in \mathbb{N}^{L_j}, \ \forall j.
  \vspace{-.5em}
\end{align*}


Let $OPT$ denote the cost, i.e., the overall job flowtime, achieved by an optimal scheduling policy. The following lemma guarantees that the optimal cost of P1, denoted by $P1$, is not far from $OPT$.

\begin{lemma}
\label{lemma_approximation}
$P1$ is upper bounded by $\big(1+ 2\Delta \big)\cdot OPT$, i.e., $P1 \leq \big(1+ 2\Delta \big)\cdot OPT$. 
\end{lemma}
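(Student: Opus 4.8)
\textbf{Proof proposal for Lemma \ref{lemma_approximation}.}

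The plan is to exhibit a single feasible solution to P1 whose objective value is at most $(1+2\Delta)\cdot OPT$; since $P1$ is the \emph{minimum} over all feasible solutions, this immediately yields the claimed bound. The natural candidate is the optimal checkpointing trajectory $\bm{y}^{\star}$ for OPT itself. I first need to check that $\bm{y}^{\star}$ is feasible for P1. The completion constraint of OPT, $\int_{a_j}^{c_j} h_j(\bm{t_j},\bm{r_j},t)\,dt = p_j$, implies $\int_{a_j}^{\infty} h_j\,dt \geq p_j$ (extending the integral past $c_j$, where $h_j \equiv 0$, changes nothing), so the first constraint of P1 holds. The resource constraint and the checkpoint-trajectory constraints (c) are identical in OPT and P1, so they carry over verbatim. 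Constraint (d) has been relaxed away in P1, so there is nothing to check there. Hence $\bm{y}^{\star}$ is P1-feasible.

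The main work is then to bound the P1-objective evaluated at $\bm{y}^{\star}$. Split the integrand: for each $j$,
\begin{equation*}
\int_{a_j}^{\infty}\frac{t-a_j+2p_j}{p_j}\,h_j(\bm{t_j^{\star}},\bm{r_j^{\star}},t)\,dt
= \frac{1}{p_j}\int_{a_j}^{c_j^{\star}}(t-a_j)\,h_j\,dt + \frac{2}{p_j}\int_{a_j}^{c_j^{\star}} h_j\,dt,
\end{equation*}
where I have used that $h_j$ vanishes outside $(a_j,c_j^{\star}]$. The second term equals $\frac{2}{p_j}\cdot p_j = 2$ by the completion equation, so summed over $j$ it contributes exactly $2N$. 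For the first term I would upper bound $h_j(\bm{t_j^{\star}},\bm{r_j^{\star}},t)$ pointwise: on each checkpoint interval $(t_j^{k-1},t_j^{k}]$ the speedup is $g(r_j^k, t_j^k-t_j^{k-1})/(t_j^k-t_j^{k-1})$, and by Lemma \ref{lemma_redundancy_progress}, $g(r,t)\leq \Delta t$, so $h_j(\bm{t_j^{\star}},\bm{r_j^{\star}},t)\leq \Delta$ for all $t\in(a_j,c_j^{\star}]$. Therefore
\begin{equation*}
\frac{1}{p_j}\int_{a_j}^{c_j^{\star}}(t-a_j)\,h_j\,dt \;\leq\; \frac{\Delta}{p_j}\int_{a_j}^{c_j^{\star}}(t-a_j)\,dt \;=\;\frac{\Delta}{p_j}\cdot\frac{(c_j^{\star}-a_j)^2}{2}.
\end{equation*}
Now I want to replace $(c_j^{\star}-a_j)^2/(2p_j)$ by something comparable to $(c_j^{\star}-a_j)$. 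The trick is that $p_j \geq c_j^{\star}-a_j$ is \emph{not} generally true, so instead I would argue the other way: since $g(r,t)\leq \Delta t$, the completion equation $\sum_k g(r_j^k,\cdot)=p_j$ gives $p_j \leq \Delta\sum_k(t_j^k-t_j^{k-1}) = \Delta(c_j^{\star}-a_j)$, hence $c_j^{\star}-a_j \geq p_j/\Delta$; that is again the wrong direction. The clean way: bound $\int_{a_j}^{c_j^{\star}}(t-a_j)h_j\,dt \leq (c_j^{\star}-a_j)\int_{a_j}^{c_j^{\star}}h_j\,dt = (c_j^{\star}-a_j)p_j$, so the first term is at most $(c_j^{\star}-a_j)$ — wait, that already loses the $\Delta$. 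Let me instead keep it as $\frac{1}{p_j}\int (t-a_j)h_j\,dt$ and note $(t-a_j)\leq (c_j^{\star}-a_j)$ on the domain, giving $\leq \frac{c_j^{\star}-a_j}{p_j}\int h_j\,dt = c_j^{\star}-a_j$. Hmm, that gives coefficient $1$, not $2\Delta$.

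So the actual accounting should be: the first term $\leq (c_j^\star - a_j)$ and the second term contributes, via $2p_j/p_j$, a further $2$ per job. But the lemma wants $(1+2\Delta)OPT$, and $OPT=\sum_j(c_j^\star-a_j)$, and $\sum_j 2 \leq \frac{2}{\min_j p_j}\cdot(\text{something})$ — this needs $N$ to be absorbed. The resolution must be that $c_j^\star - a_j \geq $ (the time to do $p_j$ work at peak rate) $= p_j/\Delta$, so $N = \sum_j 1 \leq \sum_j \frac{\Delta(c_j^\star-a_j)}{p_j}$ — still not clean unless $p_j$ is bounded below. I would therefore recheck: more likely $2N \leq 2\Delta\cdot OPT$ because each job alone forces $c_j^\star - a_j \geq p_j/\Delta \geq 1/\Delta$ only if $p_j\geq 1$; absent that, the correct bound uses $\int_{a_j}^\infty \frac{t-a_j+2p_j}{p_j}h_j dt$ and the fact that between $a_j$ and $c_j^\star$ the flowtime contribution dominates. \textbf{The main obstacle} I anticipate is exactly this bookkeeping: correctly converting the $\int(t-a_j+2p_j)h_j\,dt/p_j$ integral into a multiple of $(c_j^\star-a_j)$ using $h_j\leq\Delta$ and the completion equation, and in particular justifying why the ``$2p_j$'' additive shift in the numerator costs only a factor $2\Delta$ rather than an uncontrolled $2N$ term — the key fact being that $c_j^\star - a_j \geq p_j/\Delta$ since no job can finish faster than at peak rate $\Delta$, so $p_j \leq \Delta(c_j^\star-a_j)$ and thus $\frac{2p_j}{p_j}\cdot\frac{1}{p_j}\int h_j\,dt\cdot p_j$-type terms telescope into $2\Delta(c_j^\star-a_j)$. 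Once that inequality is in hand, summing over $j$ gives $P1 \leq \sum_j\big[(c_j^\star-a_j)+2\Delta(c_j^\star-a_j)\big] = (1+2\Delta)\,OPT$, completing the proof.
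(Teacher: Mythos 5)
Your route is the paper's own: take the OPT-optimal trajectory $\bm{y}^{*}$, note it is feasible for P1 (constraint (d) is relaxed and the completion equation gives the inequality form), bound the flowtime-weighted part by $c_j^{*}-a_j$ via $t-a_j\le c_j^{*}-a_j$ on the support of $h_j$, and handle the $2p_j$ shift via the peak-rate bound $h_j\le\Delta$ from Lemma \ref{lemma_redundancy_progress}, which gives $p_j=\int_{a_j}^{c_j^{*}}h_j\,dt\le\Delta\,(c_j^{*}-a_j)$; summing yields $(1+2\Delta)\,OPT$, exactly as in the paper. The one genuine defect is the algebra slip in your decomposition: since $\frac{t-a_j+2p_j}{p_j}=\frac{t-a_j}{p_j}+2$, the second piece is $2\int_{a_j}^{c_j^{*}}h_j\,dt=2p_j$ per job, not $\frac{2}{p_j}\int h_j\,dt=2$, so the ``$2N$'' term you then spend several lines trying to absorb never arises and no lower bound on $p_j$ is needed. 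With the split written correctly, $2p_j\le 2\Delta(c_j^{*}-a_j)$ is immediate, and the accounting in your final sentence coincides with the paper's proof; just delete the detour and state the two per-job bounds cleanly.
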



Let $\alpha_j$ and ${\beta(t)}$ denote the Lagrangian dual variables corresponding to the first and second constraint in P1 respectively. Define $\bm{\alpha} = (\alpha_j \big|j = 1,2,\cdots,N)$ and $\bm{\beta} = (\beta(t)|t \in \mathbb{R^+})$. The Lagrangian function associated with P1 can be written as:
\vspace{-.2em}
\begin{equation*}
\begin{split}
\Phi({\bm{y}},\bm{\alpha}, \bm{\beta}) &  =   \sum_{j=1}^{N}  \int_{a_j}^{\infty}\frac{({t-a_j + 2{p_j}})}{p_j} \cdot h_j(\bm{t_j},\bm{r_j},t)dt \\  & + \int_{0}^{\infty} \beta(t) \big(\sum_{j: a_j \leq t} \sum_{k=1}^{L_j}r_j^{k}  \mathbbm{1}({t \in (t^{k-1}_j,t^{k}_j]}) - M \big)dt  \\ & - \sum_{j=1}^{N}\alpha_j \big (\int_{a_j}^{\infty}h_j(\bm{t_j},\bm{r_j},t)dt  - p_j \big),
 \end{split}
  \label{resource_large_combine}
  \vspace{-.2em}
\end{equation*}
with the dual problem for P1 given by:
\begin{align*}
  \max_{\bm{\alpha} \geq \bm{0}, \bm{\beta} \geq \bm{0}} \min_{\bm{\bm{y}} } & \quad \Phi(\bm{y}, \bm{\alpha}, \bm{\beta})  \tag{D1} &\\
   \mbox{s.t.} 
  &  \ L_j \in \{1,2,\cdots,2N-1\}, \  \bm{r}_j \in \mathbb{N}^{L_j}, \  \bm{t}_j \in \mathcal{T}_j^{L_j+1}.
\end{align*}

Applying weak duality theory for continuous programs \cite{continuous_duality_theory}, we can conclude that the optimal value to D1 is a lower bound for $P1$. Moreover, the objective of D1 can be reformulated as shown in \eqref{transform-dual}.
\newcounter{mytempeqncnt}
\begin{figure*}[ht]
\normalsize
\setcounter{equation}{6}
\begin{equation}
\Phi(\bm{y}, \bm{\alpha}, \bm{\beta}) = \sum_{j}\alpha_j p_j - M\int_{0}^{\infty}\beta(t)dt  +   \int_{0}^{\infty} \sum_{j: a_j \leq t} \Big[(\frac{t-a_j}{p_j} + 2- \alpha_j)h_j(\bm{t_j},\bm{r_j},t) +  \beta(t)\sum_{k=1}^{L_j}r_j^{k} \cdot \mathbbm{1}({t \in (t^{k-1}_j,t^{k}_j])}\Big]dt.
\label{transform-dual}
\end{equation}
\hrulefill
\end{figure*}

Still it is difficult to solve D1 as it involves a minimization of a complex objective function of integer valued variables. However, it follows from Lemma \ref{lemma_redundancy_progress} that $r_j^{k} \geq \mathbbm{1}({t \in (t^{k-1}_j,t^{k}_j]}) \cdot h_j(\bm{t_j},\bm{r_j},t)$ for all $j$ and $t \geq a_j$, thus, we have that, 
\begin{equation*}
\vspace{-.2em}
\label{lemma_2_application_1}
\begin{split}
\sum_{k=1}^{L_j}r_j^{k} \mathbbm{1}({t \in (t^{k-1}_j,t^{k}_j])} & \geq \sum_{k=1}^{L_j} \mathbbm{1}({t \in (t^{k-1}_j,t^{k}_j])} h_j(\bm{t_j},\bm{r_j},t) \\ & = h_j(\bm{t_j},\bm{r_j},t).
\end{split}
\vspace{-.2em}
\end{equation*}

Therefore, it can be readily shown that the second term in the R.H.S of $\Phi(\bm{y}, \bm{\alpha}, \bm{\beta})$ in \eqref{transform-dual} is lower bounded by: 
\begin{equation*}
\int_{0}^{\infty} \sum_{j: a_j \leq t} \Big[\Big(\frac{t-a_j}{p_j} + 2- \alpha_j + \beta(t)\Big)\cdot h_j(\bm{t_j},\bm{r_j},t) \Big]dt.
\end{equation*} 
As a result, for a fixed $\alpha_j$ and $\beta(t)$ such that for all $t \geq a_j$
\begin{equation}
\label{optimization-approximation-dual-variables}
\frac{t-a_j}{p_j} + 2 - \alpha_j + \beta(t) \geq 0,
\end{equation} 
the minimum of $\Phi(\bm{y}, \bm{\alpha}, \bm{\beta})$ can be attained by setting all $\bm{r_j}$ to $\bm{0}$ and $\bm{t_j} = (a_j,c_j)$. In this solution, there are no other checkpoints for job $j$ other than the job arrival and departure.  

Therefore, restricting $\bm{\alpha}$ and $\bm{\beta}$ to satisfy \eqref{optimization-approximation-dual-variables} would give a lower bound on D1 and results in the following optimization problem: 
\vspace{-.3em}
\begin{align*}
  \max_{\bm{\alpha}, \bm{\beta}} & \qquad \qquad  \sum_{j}\alpha_j p_j - M\int_{0}^{\infty}\beta(t)dt \tag{P2} &\\
   s.t. & \qquad \qquad  \alpha_j - \beta(t) \leq \frac{t-a_j}{p_j} + 2, \ \ \forall j,\ \ t \geq a_j, &\\
  &  \qquad \qquad \alpha_j \geq 0, \ \ \forall j \\
  &  \qquad \qquad \beta(t) \geq 0, \ \ \forall t 
\end{align*}

Based on Lemma \ref{lemma_approximation}, we conclude that  $ P2 \leq P1 \leq \big(1+2\Delta \big)\cdot OPT$ where $P2$ is the optimal cost for P2. 

Next, we shall find a setting of the dual variables in P2 such that the corresponding objective is lower bounded by $O(\epsilon)\cdot SR$ under a $(1+\epsilon)$-speed resource augmentation. To achieve this, we first consider a pure SRPT scheduling process that does not exploit job redundancy. We then use this to motivate a setting of dual variables which feasible for P2.  Finally, we show that the objective for this setting of dual variables is at least $O(\epsilon)$ times the cost of SRPT, which is also lower bounded by $O(\epsilon)\cdot SR$ since the cost of SRPT is no smaller than $SR$.

\subsubsection{Setting of dual variables}
\label{setting_dual_variable}
Observe that SRPT+R and SRPT only differ when $n(t) < M$ and that when this is the case SRPT only assigns a single machine to each active job. Since SRPT+R maintains the same scheduling order and each job is scheduled with at least the same number of copies as SRPT, we conclude that the cost of SRPT, denoted by $SRPT$, is a lower bound for $SR$, where $SR$ denotes the overall job flowtime achieved SRPT+R.

In this section, we let $n(t)$ and $p_j(t)$ denote the number of active jobs and the size of the remaining workload of job $j$ under SRPT respectively. 




Let $\Theta_j = \{k: a_k \leq a_j \leq c_k\}$, the set of jobs that are active when job $j$ arrives and $A_j = \{k \neq j: k \in \Theta_j \ \mbox{and} \  p_k(a_j) \leq p_j\}$, i.e., jobs whose residual processing time upon job $j$'s arrival is less than job $j$'s processing requirement. Define $\rho_j = |A_j|$, we shall set the dual variables as follows:
\vspace{-.5em}
\begin{equation}
\label{dual_variable_1}
\begin{split}
\alpha_j & = \frac{1}{(1+\epsilon)p_j}\sum_{k=1}^{\rho_j}\Big( \big \lfloor \frac{n(a_j) - k}{M}\big\rfloor - \big \lfloor\frac{n(a_j) - k - 1}{M}\big \rfloor \Big) p_k(a_j) \\ & + \frac{1}{1+\epsilon} \Big( \big \lfloor \frac{n(a_j) - \rho_j - 1}{M}\big \rfloor + 1\Big),
\end{split}
\vspace{-.5em}
\end{equation}
where $\epsilon > 0$ and 
\vspace{-.5em}
\begin{equation}
\label{dual_variable_2}
\beta(t) = \frac{1}{(1+\epsilon)M}n(t).
\vspace{-.5em}
\end{equation}
We show in the following lemma that this setting of dual variables is feasible.

\begin{figure}
\centering
\includegraphics[width=0.48\textwidth]{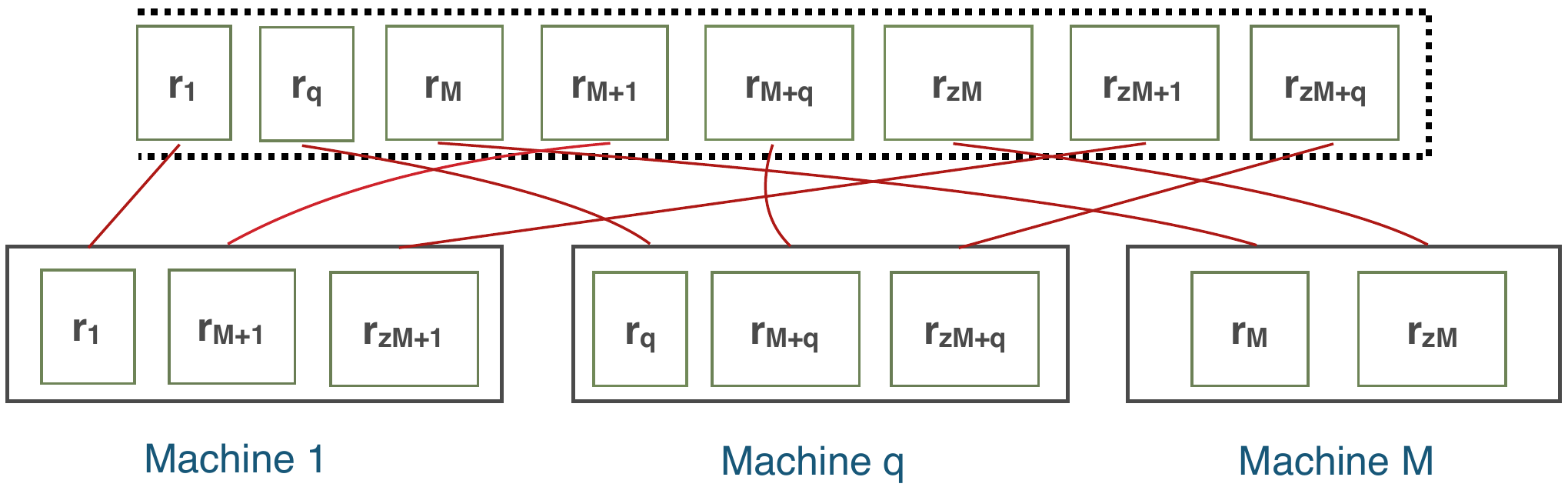}
\caption{The scheduling process of SRPT at time $a_j$ where $n(a_j) = zM + q$ and there are no further job arrivals after $a_j$. Jobs are sorted based on the remaining size, which is denoted by $r_j$ for job $j$, i.e., $r_j = p_j(a_j)$. Jobs indexed by $kM + i$ for some integer valued $k$ and $i$ are assigned to machine $i$. }
\label{assignment}
\end{figure}

\begin{lemma}
The setting of dual variables in \eqref{dual_variable_1} and \eqref{dual_variable_2} is feasible to P2. 
\end{lemma}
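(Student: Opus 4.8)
The plan is to verify directly that the proposed dual variables $\bm{\alpha}$ and $\bm{\beta}$ satisfy all three constraints of P2, namely $\alpha_j \geq 0$, $\beta(t) \geq 0$, and the coupling constraint $\alpha_j - \beta(t) \leq \frac{t-a_j}{p_j} + 2$ for all $j$ and $t \geq a_j$. Nonnegativity of $\beta(t)$ is immediate from \eqref{dual_variable_2} since $n(t) \geq 0$, and nonnegativity of $\alpha_j$ follows from \eqref{dual_variable_1} because every term is a product of nonnegative quantities: the differences of floor functions $\lfloor \frac{n(a_j)-k}{M}\rfloor - \lfloor\frac{n(a_j)-k-1}{M}\rfloor$ are each either $0$ or $1$, the residual sizes $p_k(a_j)$ are nonnegative, and $\lfloor \frac{n(a_j)-\rho_j-1}{M}\rfloor + 1 \geq 1 > 0$ (using that $\rho_j \le n(a_j)-1$, so the argument of the floor is $\ge -1$). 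So the real work is the coupling constraint.

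For the coupling constraint, I would fix job $j$ and a time $t \geq a_j$ and bound $\alpha_j$ from above. The key structural fact, illustrated in Fig.~\ref{assignment}, is that under SRPT at time $a_j$ the active jobs can be grouped by which machine they are round-robin assigned to: job ranked $kM+i$ (in increasing residual-size order) sits on machine $i$, so there are roughly $\lfloor \frac{n(a_j)-k-1}{M}\rfloor + 1$ jobs ``ahead of or tied with'' position $k$ on a given machine. The two pieces of $\alpha_j$ in \eqref{dual_variable_1} are designed so that $(1+\epsilon)p_j \alpha_j$ is exactly the contribution to the ``$SRPT$-style'' cost charged to job $j$: the first sum accounts for the delay job $j$ imposes on (or inherits from) the $\rho_j$ smaller jobs in $A_j$, weighted by how many of them share a machine, and the second term accounts for the jobs that are larger than $p_j$. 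I would show $(1+\epsilon) p_j \alpha_j \le \sum_{k=1}^{\rho_j} p_k(a_j)\cdot(\text{indicator weight}) + p_j \cdot (\lfloor\frac{n(a_j)-\rho_j-1}{M}\rfloor + 1)$, and then split into the two regimes used in the algorithm description: when $n(a_j) \ge M$, use that the floor-difference weights telescope and $p_k(a_j) \le p_j$ for $k \in A_j$, giving a bound of the form $\alpha_j \le \frac{1}{1+\epsilon}\big(\frac{\text{something}}{p_j} \cdot \text{(work ahead)} + \text{(job count)}/M + 1\big)$; when $n(a_j) < M$ the first term vanishes and $\alpha_j \le \frac{1}{1+\epsilon}$, which combined with $\beta(t) \ge 0$ trivially satisfies the constraint since $\frac{t-a_j}{p_j}+2 \ge 2 > \frac{1}{1+\epsilon}$.

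The heart of the argument, and the step I expect to be the main obstacle, is controlling $\alpha_j - \beta(t)$ in the heavily loaded regime. Here I would use the standard SRPT monotonicity facts: the remaining work of the jobs in $A_j$ that are still present at time $t$ can only have decreased, and SRPT always works on the $M$ smallest jobs, so between $a_j$ and $t$ it processes work at rate $M$ (or $(1+\epsilon)M$ with resource augmentation) on jobs no larger than $p_j(a_j)$, hence the quantity $\sum_{k\in A_j,\text{ still active at }t} p_k(t)$ decreases fast enough that $\alpha_j$'s first term, evaluated with the current configuration, stays below $\frac{t-a_j}{(1+\epsilon)p_j} + \frac{1}{1+\epsilon}$, while $\frac{n(t)}{(1+\epsilon)M} = \beta(t)$ absorbs the count of currently-active jobs. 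Adding the ``$+1$'' slack and the ``$+2$'' on the right, and using $\frac{1}{1+\epsilon} \le 1$, closes the inequality. I would be careful about two subtleties: (i) the floor functions mean I should argue via a clean counting/telescoping identity rather than treating $n/M$ as exact, and (ii) since $\alpha_j$ is defined using the SRPT (not SRPT+R) trajectory, all monotonicity claims must reference the pure-SRPT schedule, which is legitimate because P2's feasibility only concerns the $\bm{\alpha},\bm{\beta}$ values themselves. Once the inequality $\alpha_j \le \frac{t-a_j}{p_j} + 2 + \beta(t)$ is established in both regimes, feasibility for P2 follows.
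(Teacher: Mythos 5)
Your overall skeleton matches the paper's: nonnegativity is immediate, the lightly loaded regime gives $\alpha_j=\frac{1}{1+\epsilon}$ and is trivially feasible, and in the loaded regime one lower-bounds $t-a_j$ by work already completed while letting $\beta(t)$ absorb the still-active jobs. However, the step you yourself call the heart of the argument rests on a claim that is false as stated and would not carry the proof: SRPT does \emph{not} ``process work at rate $M$ on jobs no larger than $p_j(a_j)$'' between $a_j$ and $t$ --- once fewer than $M$ jobs with remaining size at most $p_j$ are left, or when smaller jobs arriving after $a_j$ capture machines, the capacity devoted to $A_j$ can be arbitrarily small, so $\sum_{k\in A_j,\ \text{active}} p_k(t)$ need not decrease at rate $M$. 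What is actually needed is a lower bound on $t-a_j$ in terms of the work of those jobs of $\Theta_j$ that have \emph{already completed} by $t$, and the paper obtains it per machine (Fig.~\ref{assignment}): the jobs at ranks $q, M+q, 2M+q,\dots$ are all served by machine $q$, whose capacity over $(a_j,t]$ is $(1+\epsilon)(t-a_j)$, yielding \eqref{case_1_proof} and \eqref{case_ii_decom}. This matters because the first term of $\alpha_j$ in \eqref{dual_variable_1} is precisely a sum over such an arithmetic subsequence of ranks; an aggregate ``total capacity $(1+\epsilon)M(t-a_j)$'' bound only controls the average $\frac{1}{M}\sum_k p_k(a_j)$, and since $p_k(a_j)$ is increasing each selected term $p_{kM+q}(a_j)$ is the \emph{maximum} of its preceding block, so the subsequence sum dominates the average and the aggregate bound does not directly dominate it; making an aggregate route work requires an explicit blockwise comparison with boundary terms of size at most $p_j$, bookkeeping your sketch never performs. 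A related confusion: $\alpha_j$ is a constant fixed at time $a_j$, so ``$\alpha_j$'s first term, evaluated with the current configuration'' has no meaning --- what must be shown is that this fixed number is at most $\frac{t-a_j}{p_j}+2+\beta(t)$ for every $t\geq a_j$.

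The second genuine gap is the missing case corresponding to the paper's Case III: when the number $\kappa$ of jobs of $\Theta_j$ completed by time $t$ exceeds $\rho_j$, the second term $\frac{1}{1+\epsilon}\bigl(\lfloor\frac{n(a_j)-\rho_j-1}{M}\rfloor+1\bigr)$ of $\alpha_j$ can no longer be charged to $\beta(t)=\frac{n(t)}{(1+\epsilon)M}$ from \eqref{dual_variable_2}, because $n(t)$ may be as small as $n(a_j)-\kappa$, which is far below $n(a_j)-\rho_j$; your plan of letting ``$\beta(t)$ absorb the count of currently-active jobs'' breaks exactly there. The paper closes this by noting that every completed job of rank above $\rho_j$ had remaining size at least $p_j$, so these completions add at least $\lfloor\frac{\kappa-\rho_j}{M}\rfloor\, p_j$ to the capacity lower bound on $t-a_j$, exactly offsetting the deficit in $\beta(t)$. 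Without this compensation step (or some substitute), the constraint $\alpha_j-\beta(t)\leq\frac{t-a_j}{p_j}+2$ is not established for large $t$, so the proposal as written does not prove feasibility.
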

\begin{proof}
Since $\bm{\alpha}$ and $\bm{\beta}$ are both nonnegative, it only remains to show $\alpha_j - \beta(t) \leq \frac{t-a_j}{p_j}+  2$ for all $j$ and $t \geq a_j$. First, $\alpha_j$ can be represented as follows: 
\vspace{-.5em}
\begin{equation}
\label{case_1_decom}
\alpha_j  = \frac{\sum_{k=0}^{z}p_{kM + q}(a_j) \mathbbm{1}({kM+q \leq \rho_j})}{(1+\epsilon)p_j}  + \frac{\Big( \big \lfloor \frac{n(a_j) - \rho_j - 1}{M}\big \rfloor + 1\Big)}{1+\epsilon}. 
\end{equation}
For ease of illustration, let $\Omega_1$ and $\Omega_2$  denote the two terms on the R.H.S of \eqref{case_1_decom} respectively.

If $n(a_j) \leq M$, we have $\alpha_j = \frac{1}{1+\epsilon}$ and the result follows. Therefore, we only consider $n(a_j) = zM + q > M$ and analyze the following three cases:

\textbf{Case I:} All the jobs in $\Theta_j$ have completed at time $t$. As depicted in Fig.~\ref{assignment}, if there are no job arrivals after time $a_j$, then, jobs indexed by $km + q$ where $k$ is a non-negative integer are all processed on Machine $q$. Since the service capacity of Machine $q$ is $(t-a_j)$ during $(a_j,t]$, thus, it follows that, 
\vspace{-.5em}
\begin{equation}
\label{case_1_proof}
t - a_j \geq \frac{1}{1+\epsilon} \sum_{k=0}^{z} p_{kM+q}(a_j).
\vspace{-.5em}
\end{equation}
In contrast, if there are other job arrivals after time $a_j$, Machine $q$ needs to process an amount of work which exceeds $\sum_{k=0}^{z} p_{kM+q}(a_j)$, therefore, \eqref{case_1_proof} still holds.  Thus, we have that,
\vspace{-.1em}
\begin{equation}
\begin{split}
\frac{t-a_j}{p_j} - \Omega_1 & \geq \frac{\sum_{k=0}^{z}p_{kM + q}(a_j) \mathbbm{1}({kM + q \geq \rho_j + 1})}{(1+\epsilon)p_j} \\ & \geq \frac{\sum_{k=0}^{z}\mathbbm{1}({kM+q \geq \rho_j + 1})}{1+\epsilon}= \Omega_2.
\end{split}
\vspace{-.5em}
\end{equation}

\textbf{Case II:} The jobs indexed from $1$ to $\kappa$ in $\Theta_j$ have completed and $\kappa \leq \rho_j$. Let $\kappa = z_1M + q_1$. Similar to Case I, it follows that,  
\vspace{-.5em}
\begin{equation}
\label{case_ii_decom}
t - a_j \geq \frac{1}{1+\epsilon} \sum_{k=0}^{z_1} p_{kM + q_1}(a_j). 
\vspace{-.5em}
\end{equation}
In addition, the number of active jobs, $n(t)$, is no less than $n(a_j) - \kappa$. Therefore, we have:  
\vspace{-.5em}
\begin{equation}
\begin{split}
\alpha_j & \overset{\mbox{(ii)}} \leq \frac{\sum_{k=0}^{z_1} p_{kM + q_1}(a_j) + \frac{\lceil\frac{\rho_j-\kappa}{M}\rceil}{1+\epsilon}}{(1+\epsilon)p_j}  + \frac{\Big( \big \lfloor \frac{n(a_j) - \rho_j - 1}{M}\big \rfloor + 1\Big)}{1+\epsilon} \\ & \overset{\mbox{(iii)}} \leq \frac{t-a_j}{p_j} + \frac{1}{1+\epsilon}\lceil\frac{\rho_j-\kappa}{M}\rceil + \frac{\Big( \big \lfloor \frac{n(a_j) - \rho_j - 1}{M}\big \rfloor + 1\Big)}{1+\epsilon} \\
& \leq \frac{t-a_j}{p_j} + \frac{1}{1+\epsilon} \big( \lfloor \frac{n(a_j) - \kappa}{M}\rfloor + 2 
\big) \\ & \leq \frac{t-a_j}{p_j} + \beta(t) + 2,
\end{split}
\vspace{-.5em}
\end{equation}
where $\lceil x \rceil$ denotes the smallest integer which is no less than $x$ and $\mbox{(ii)}$ is due to that $\Omega_1 \leq \frac{1}{(1+\epsilon)p_j} \sum_{k=0}^{z_1} p_{kM + q_1}(a_j) + \frac{1}{1+\epsilon}\cdot \lceil\frac{\rho_j-\kappa}{M}\rceil$. $\mbox{(iii)}$ is due to \eqref{case_ii_decom}.

\textbf{Case III:}  The jobs indexed from $1$ to $\kappa$ in $\Theta_j$ have completed and $\kappa > \rho_j$. In this case, \eqref{case_ii_decom} still holds. Moreover, we have that $\sum_{k=0}^{z_1} p_{kM + q_1}(a_j) \geq \Omega_1 + \lfloor \frac{\kappa - \rho_j}{M} \rfloor p_j$. Therefore, it follows that: 
\vspace{-.5em}
\begin{equation}
\begin{split}
\alpha_j & \leq \frac{t-a_j}{p_j} - \frac{1}{1+\epsilon} \lfloor \frac{\kappa - \rho_j}{M}\rfloor + \frac{1}{1+\epsilon}\lceil \frac{n(a_j) - \rho_j}{M} \rceil 
\\ & \leq \frac{t-a_j}{p_j} + \frac{1}{1+\epsilon} \big( \lfloor \frac{n(a_j) - \kappa}{M} \rfloor + 2 
\big) \\ & \leq \frac{t-a_j}{p_j} + \beta(t) + 2.
\end{split}
\vspace{-.5em}
\end{equation}

Thus, we conclude that, for all the three cases above, the constraint between  $\alpha_j$ and $\beta(t)$ is well satisfied. 
\end{proof}

\subsubsection{Performance bound}
To bound the cost of the dual variables which are set in \eqref{dual_variable_1} and \eqref{dual_variable_2}, we first show the following lemma to quantify the total job flowtime under SRPT in the transient case where there are no job arrivals after time $t$. 
\begin{lemma}
\label{srpt_optimality_transient}
When there are no job arrivals after time $t$, the overall remaining job flowtime under SRPT scheduling, $F(t)$, is given by:
\vspace{-.5em}
\begin{equation}
\label{equation_setting_dual}
F(t) = \sum_{j=1}^{n(t)} (\big 
 \lfloor \frac{n(t)-j}{M}\big \rfloor + 1) p_j(t).
 \vspace{-.5em}
\end{equation}
\end{lemma}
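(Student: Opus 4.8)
The plan is to prove the formula for $F(t)$ by analyzing how SRPT processes jobs on $M$ machines in the transient regime (no arrivals after $t$), and to argue by a direct accounting of each job's completion time.

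First I would set up the picture illustrated in Fig.~\ref{assignment}. Relabel the $n(t)$ active jobs by increasing residual size, $p_1(t) \leq p_2(t) \leq \cdots \leq p_{n(t)}(t)$. The key structural claim is that under SRPT with no future arrivals, the jobs complete in this sorted order, and moreover the system behaves as if job $j$ is ``assigned'' to machine $((j-1) \bmod M) + 1$: machine $i$ sequentially processes jobs $i, i+M, i+2M, \ldots$, each to completion before starting the next. This follows because SRPT always runs the $M$ smallest remaining jobs; once the $q$ smallest have been completed, the next batch of $M$ (or fewer) is run, and since all machines run at unit speed and jobs are handed out in size order, no reordering ever occurs and no job is ever preempted in a way that changes the assignment. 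I would justify this by induction on completion events: between consecutive completions all $M$ active machines work at unit rate on distinct jobs, the smallest one finishes first, and the invariant is restored.

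Next I would compute the completion time of job $j$ under this schedule. Job $j$ sits on machine $i = ((j-1)\bmod M)+1$ behind the jobs $i, i+M, \ldots, j-M$ that precede it on that machine; writing $j$ in the form considered in the excerpt, the number of such predecessors is exactly $\big\lfloor \frac{j-1}{M}\big\rfloor$. But SRPT does not literally dedicate machine $i$ to that fixed subsequence — it rebalances — so rather than tracking per-machine queues I would instead count, for each job $j$, how many jobs are still in the system during each infinitesimal slice of time and integrate. Concretely, $F(t) = \int_t^\infty n(\tau)\,d\tau = \sum_{j=1}^{n(t)} (\text{remaining flowtime contributed while } j \text{ is alive})$, and the cleanest route is: the residual flowtime of job $j$ equals its own residual completion time, and at the moment the $(j-1)$-th job (in sorted order) completes there are $n(t)-j+1$ jobs left, all of size $\geq p_j(t)$ in residual terms after appropriate subtraction. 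Summing the contribution $p_j(t)$ once for job $j$ itself and once for every job that must wait ``a full round'' behind it gives the coefficient $\big\lfloor \frac{n(t)-j}{M}\big\rfloor + 1$: job $j$ contributes $p_j(t)$ to its own flowtime, plus $p_j(t)$ to the flowtime of each of the $n(t)-j$ larger jobs that lie behind it, but only in batches of $M$ since $M$ such larger jobs are drained in parallel during the time $p_j(t)$ is being processed — hence the division by $M$ and the floor.

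I expect the main obstacle to be making the ``batches of $M$ in parallel'' accounting rigorous rather than hand-wavy, i.e., turning the intuitive Fig.~\ref{assignment} picture into a clean proof that job $j$'s residual size $p_j(t)$ gets added to the remaining flowtime of precisely $\big\lfloor \frac{n(t)-j}{M}\big\rfloor$ of the later jobs plus itself. The tidy way to do this is by induction on $n(t)$: remove the smallest job (it completes at time $t + p_1(t)$ after being run on one of the $M$ machines, or more precisely after $p_1(t)$ units of wall-clock time during which $\min\{M, n(t)\}$ jobs each receive $p_1(t)$ units of work), observe that over $[t, t+p_1(t)]$ every alive job's residual shrinks by $p_1(t)$ (for the $M$ smallest) while the total flowtime accrues $n(t)\cdot p_1(t)$, then apply the inductive hypothesis to the reduced instance with $n(t)-1$ jobs and shifted residuals. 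Re-indexing the floor terms $\big\lfloor \frac{n(t)-j}{M}\big\rfloor$ correctly across this reduction step is the delicate bookkeeping, but it is purely arithmetic and goes through.
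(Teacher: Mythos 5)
Your proposal is correct, but it reaches \eqref{equation_setting_dual} by a different mechanism than the paper. The paper's proof never writes down per-job completion times or uses induction: it counts, over the successive intervals between completions, how many of the first $k$ completing jobs are still waiting (Fig.~\ref{srpt_waiting}), which yields the block identity $\sum_{j=k-M+1}^{k} f_{j}(t) = \sum_{j=1}^{k}p_j(t)$ of \eqref{equation_lemma_to_prove}; summing this identity over the blocks $k = q+M, q+2M,\dots$ together with $f_j(t)=p_j(t)$ for the first $q$ jobs gives the formula. You instead use the structural fact (which both arguments need, and which you justify correctly: with no arrivals, SRPT never preempts a started job, jobs start and complete in sorted order) to get either the explicit per-job formula $f_j(t)=\sum_{k\le j,\ k\equiv j\ (\mathrm{mod}\ M)}p_k(t)$, or, as your "tidy" route, an induction on $n(t)$ that peels off the smallest job: over $[t,t+p_1(t)]$ flowtime accrues $n(t)p_1(t)$, the $M$ smallest residuals shrink by $p_1(t)$, and the inductive hypothesis applies to the reduced instance. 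The bookkeeping you flag does close: after substituting the shifted residuals, the induction reduces to the identity $\sum_{j=1}^{M}\bigl(\bigl\lfloor\frac{n-j}{M}\bigr\rfloor+1\bigr)=n$ for $n\ge M$, which holds because $n-1,\dots,n-M$ form a full residue window, so the coefficients match. Comparing the two: the paper's aggregate waiting-time identity avoids any induction or floor-function arithmetic and works block by block; your per-job/inductive accounting is more mechanical and makes the coefficient $\lfloor\frac{n(t)-j}{M}\rfloor+1$ transparent as "job $j$ plus the jobs queued behind it in its residue class" (the picture of Fig.~\ref{assignment}), at the cost of the re-indexing step. Either argument is a complete and valid proof of Lemma \ref{srpt_optimality_transient}.
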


\begin{proof}
In this proof, we shall not assume resource augmentation. Let $f_j(t)$ denote the remaining flowtime for job $j$ at time $t$. Thus, the job completion time, $c_j$ is equal to $f_j(t) + t$. Since we have indexed jobs such that $p_1(t) \leq p_2(t) 
\leq \cdots \leq p_{n(t)}(t)$, under SRPT, it follows that $c_1 \leq c_2 \leq \cdots \leq c_{n(t)}$. 
When $n(t) \leq M$, \eqref{equation_setting_dual} follows immediately since all jobs can be scheduled simultaneously and $f_j(t)$ is equal to $p_j(t)$. 

Let us then consider the case where $n(t) > M$. Let $n(t) = zM + q$ where $z \geq 1$, $0 \leq q \leq M-1$ and $z$, $q$ are non-negative integers. We first show that for all $k$ such that $M \leq k \leq n(t)$, the following result holds: 
\vspace{-.4em}
\begin{equation}
\label{equation_lemma_to_prove}
\sum_{j=k-M+1}^{k} f_{j}(t) = \sum_{j=1}^{k}p_j(t).
\vspace{-.1em}
\end{equation}
As illustrated in Fig.~\ref{srpt_waiting}, at any time between $t$ and $c_{1}$, there are $(k-M)$ jobs waiting to be processed among those $k$ jobs which complete first. Hence, the accumulated waiting time in this period is $(k-M)f_1(t)$. Similarly, at any time between $c_{1}$ and $c_{2}$, there are  $(k-M-1)$ jobs waiting to be processed and they contribute $(k-M-1)\cdot(c_{2} - c_{1}) = (k-M-1)\cdot(f_{2}(t) - f_{1}(t))$ waiting time. Hence, the total waiting time of the $k$ jobs is given by: 
\begin{equation}
\sum_{j=0}^{k-M-1}(k-M-j)\cdot(f_{j+1}(t) - f_{j}(t)) = \sum_{j=1}^{k-M}f_{j}(t).
\end{equation}
Therefore, the total remaining flowtime for these $k$ jobs is as follows:  
\begin{equation}
\label{lower_bound_compleiton_time}
\sum_{j=1}^{k}f_{j}(t) = \sum_{j=1}^{k}p_j(t) + \sum_{j=1}^{k-M}f_{j}(t).
\end{equation}
By shifting terms in \eqref{lower_bound_compleiton_time}, we have: $\sum_{j=k-M+1}^{k} f_{j}(t) = \sum_{j=1}^{k}p_j(t)$. Summing up all job flowtime, it follows that:
\begin{equation}
\label{lower_bound_optimal_scheduling}
\begin{split}
 \sum_{j=1}^{n(t)}f_j(t) & = \sum_{j=1}^{q}f_{j}(t) + \sum_{k=1}^{z} \sum_{j=(k-1)M + q+1}^{kM + q}f_{j}(t) \\ &  \overset{(i)} =  \sum_{j=1}^{q}{p_j(t)} + \sum_{k=1}^{z}\sum_{j=1}^{kM+q}p_j(t)  \\ & = \sum_{j=1}^{n(t)} (\big 
 \lfloor \frac{n(t)-j}{M}\big \rfloor + 1) p_j(t),
\end{split}
\end{equation}
where on the R.H.S. of $(i)$, the first term is due to that the flowtime of the first $q$ jobs is equal to their remaining job size and the second term is due to that $\sum_{j=(k-1)M + q+1}^{kM + q}f_{j}(t) = \sum_{j=1}^{kM+q}p_j(t)$. This completes the proof.
\end{proof}

Based on Lemma \ref{srpt_optimality_transient}, if job $j$ never arrive to the system and the subsequent jobs do not enter the system, the overall remaining job flowtime at time $a_j$ is given by: 
\vspace{-.5em}
\begin{equation}
\label{before_arrival_j}
F_j^{'}(a_j) = \sum_{k=1}^{n(a_j) - 1} (\big 
 \lfloor \frac{n(a_j)-1-k}{M}\big \rfloor + 1) p_k(a_j).
 \vspace{-.5em}
\end{equation}
In contrast, when job $j$ arrives and the subsequent jobs do not arrive to the system at time $a_j$, the overall remaining job flowtime at time $a_j$ is as follows:
\vspace{-.5em}
\begin{equation}
\label{after_arrival_j}
\begin{split}
F_j(a_j) & = \sum_{k=1}^{\rho_j} (\big \lfloor \frac{n(a_j) - k}{M}\big\rfloor + 1) p_k(a_j) \\ & + \Big( \big \lfloor \frac{n(a_j) - \rho_j - 1}{M}\big \rfloor + 1\Big)p_j \\ & + \sum_{k=\rho_j+1}^{n(a_j)} (\big \lfloor \frac{n(a_j) - k}{M}\big\rfloor + 1) p_k(a_j). 
\end{split} 
\vspace{-.5em}
\end{equation}
Therefore, one can view $\alpha_j$ as the incremental increase of the overall job flowtime caused by the arrival of job $j$ by taking the difference of \eqref{before_arrival_j} and \eqref{after_arrival_j} and then dividing by $(1+\epsilon)p_j$. Since we are using a $(1+\epsilon)$-speed resource augmentation, thus, $\sum_{j}p_j\alpha_j$ exactly characterizes the overall job flowtime in SRPT, i.e., $\sum_{j}\alpha_j p_j = SRPT$.

Moreover, $\beta(t)$ reflects the loading condition of the cluster in our setting, thus, 
$M\int_{0}^{\infty}\beta(t) = \frac{1}{1+\epsilon}SRPT$. Therefore, we have $\sum_{j}\alpha_j p_j - M\int_{0}^{\infty}\beta(t) dt = \frac{\epsilon}{1+\epsilon}SRPT$. 

Based on Lemma \ref{lemma_approximation}, we conclude that $\frac{\epsilon}{1+\epsilon}SR \leq \frac{\epsilon}{1+\epsilon}SRPT \leq P2 \leq P1 \leq \big(1+ 2\Delta \big)\cdot OPT$. This implies $SR \leq O(\frac{1}{\epsilon})OPT$ and completes the proof of Theorem \ref{theorem_1}. \qed

\begin{figure}
\centering
\includegraphics[width=0.48\textwidth]{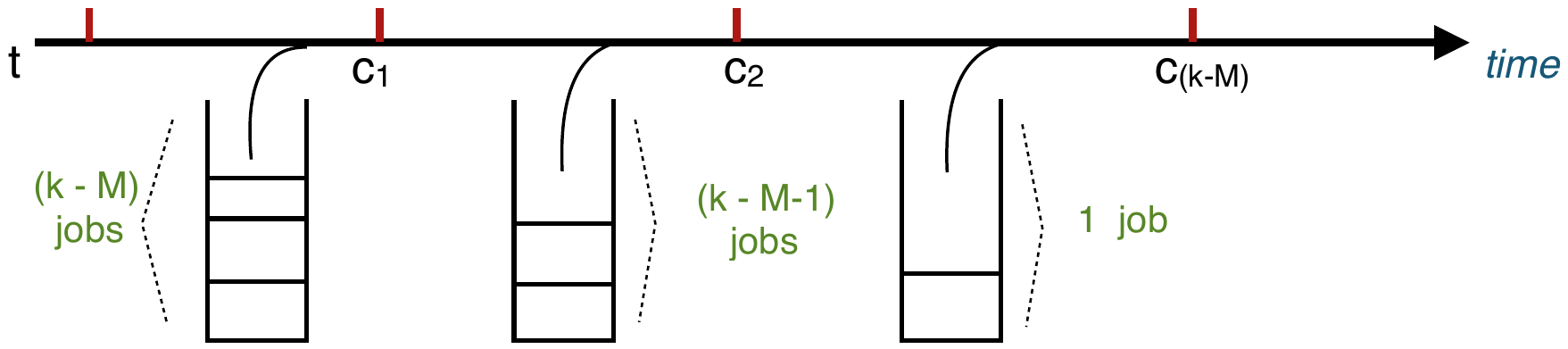}
\caption{The number of jobs waiting to be processed in different time periods where $k > M$.}
\label{srpt_waiting}
\end{figure}

\section{Algorithm Design for Multitasking Processors}
\label{resource-sharing}
In this section, we design scheduling algorithms for clusters supporting multitasking. Besides checkpointing times and level of redundancy, one must introduce additional variables, $\bm{x} = (\bm{x_j}: j = 1,2,\cdots,N)$ where $\bm{x_j} = (x^k_j|k = 1,2,\cdots,L_j)$ are the fractions of resource shares to be allocated to each job during checkpointing intervals. To be specific, we first design the Fair+R Algorithm which is an extension of the Fair Scheduler. Fair+R allows all jobs in the cluster to (near) equally share resources in the cluster, with priority given to those which arrive most recently. We then generalize Fair+R to design the LAPS+R$(\beta)$ algorithm, which is an extension of LAPS (the Latest Arrival Processor Sharing). The main idea of LAPS is to share resources only among a certain fraction of jobs in the cluster \cite{scalably-scheduling}. However, the initial version of LAPS only considers the speed scaling among different jobs, our proposed LAPS+R($\beta$) Algorithm extends this such that redundant copies of jobs can be made dynamically.  In this section, we assume without loss of generality that jobs have been ordered such that $a_1 \leq a_2 \leq \cdots \leq a_{n(t)}$.

\subsection{Fair+R Algorithm and the performance guarantee} 
\label{Fair_R_algorithm}
Let $n(t) = kM + l$ denote the number of jobs which are active in the cluster at time $t$. 

At a high level, Fair+R works as follows. When $n(t) \geq M$, the $kM$ jobs which arrive the most recently, i.e., jobs indexed from $(l+1)$ to ${n(t)}$, are each assigned to one server and gets a resource share of $\frac{1}{k}$. Each server processes $k$ jobs simultaneously. By contrast, if $n(t) < M$, the latest arrival job, i.e., Job ${n(t)}$, is scheduled on $M-\lfloor \frac{M}{n(t)} \rfloor (n(t)-1)$ machines and the others are each scheduled on $\lfloor \frac{M}{n(t)} \rfloor $ machines. In this case, there is no multitasking. 

The corresponding pseudo-code is exhibited in the panel named Algorithm \ref{Fair+R}. Our main result for Fair+R is given in the following theorem: 
\begin{theorem}
\label{theorem_2}
Fair+R is {(4+$\epsilon)$-speed $O(\frac{1}{\epsilon})$-competitive} with respect to the total job flowtime. 
\end{theorem}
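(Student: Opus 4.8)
\textbf{Proof proposal for Theorem~\ref{theorem_2}.}

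The plan is to mirror the dual-fitting framework used for Theorem~\ref{theorem_1}, adapting it to the multitasking setting where the additional resource-share variables $\bm{x_j}$ appear. First I would write down the analogue of problem P1: approximate the objective $\sum_j (c_j - a_j)$ by a term of the form $\sum_j \int_{a_j}^\infty \frac{t - a_j + c\,p_j}{p_j}\,\hat h_j(\bm{t_j},\bm{x_j},\bm{r_j},t)\,dt$ for a suitable constant $c$, relax the checkpointing-overhead constraint (d), and keep the job-completion constraint $\int_{a_j}^\infty \hat h_j \geq p_j$ together with the resource constraint, which in the multitasking case reads $\sum_{j: a_j \le t}\sum_k x_j^k r_j^k \mathbbm{1}(t \in (t_j^{k-1},t_j^k]) \le M$. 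The first key step is an approximation lemma analogous to Lemma~\ref{lemma_approximation}, showing the optimal cost of this relaxed primal is within a constant factor (depending on $\Delta$) of $OPT$; this should go through essentially as before since $\hat h_j \le h_j \le \Delta$ still bounds the speedup.

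Next I would form the Lagrangian with dual variables $\alpha_j$ (for completion) and $\beta(t)$ (for the resource constraint), and argue — again using Lemma~\ref{lemma_redundancy_progress}, now in the form $x_j^k r_j^k \ge x_j^k \cdot \mathbbm{1}(\cdots) h_j = \hat h_j$ — that the inner minimization over $\bm{y}$ and $\bm{x}$ is minimized by taking all redundancy to zero and no extra checkpoints, reducing to a clean dual LP P2 of the shape $\max\ \sum_j \alpha_j p_j - M\int_0^\infty \beta(t)\,dt$ subject to $\alpha_j - \beta(t) \le \frac{t - a_j}{p_j} + c$ for all $j$ and $t \ge a_j$. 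The crux is then to exhibit a feasible dual assignment whose objective is $\Omega(\epsilon)$ times the cost of Fair+R under $(4+\epsilon)$-speed augmentation. I would set $\beta(t) = \frac{1}{(4+\epsilon)M} n(t)$ (so that $M\int \beta$ is a small multiple of the Fair+R cost), and set $\alpha_j$ to capture the marginal flowtime increase caused by job $j$'s arrival under a reference Fair (no-redundancy, LAPS-style latest-arrival) schedule, scaled by $\frac{1}{(4+\epsilon)p_j}$. Since Fair+R schedules each job with at least as many copies and the same priority order as plain Fair, the cost of Fair lower-bounds the Fair+R cost, so it suffices to relate $\alpha_j$ and $\beta$ to the Fair schedule.

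The main obstacle, and the place where the factor $4$ rather than $1$ enters, will be verifying dual feasibility $\alpha_j - \beta(t) \le \frac{t-a_j}{p_j} + c$. Under Fair/LAPS-type sharing, a tagged job's residual work is drained at rate roughly $\frac{1}{k}$ per assigned server while up to $k$ jobs share a machine, and newly arriving later jobs steal a bounded fraction of its service; bounding the service a job can have lost by time $t$ against $n(t)$ (via $\beta(t)$) and against elapsed time $\frac{t-a_j}{p_j}$ requires a careful case analysis on how many of the jobs active at $a_j$ have since departed — exactly analogous to Cases I--III in the proof of the non-multitasking feasibility lemma, but now with the resource-share factors $\frac{1}{k}$ and the ``latest arrival'' reordering complicating the accounting. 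The speed-up budget of $4$ is what absorbs the constant-factor slack from (i) the approximation in Lemma~\ref{lemma_approximation}'s analogue, (ii) the near-equal rather than exactly-equal sharing when $n(t)$ is not a multiple of $M$, and (iii) the service lost to later arrivals. I expect that once feasibility is established, the final chain $\frac{\epsilon}{4+\epsilon}\cdot(\text{Fair+R cost}) \le P2 \le P1 \le O(1)\cdot OPT$ closes exactly as in Theorem~\ref{theorem_1}, yielding the $O(\frac{1}{\epsilon})$ competitive ratio.
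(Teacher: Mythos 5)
Your overall architecture (approximate primal with objective weight $\frac{t-a_j+c\,p_j}{p_j}$, an approximation lemma paralleling Lemma~\ref{lemma_approximation}, reduction to a dual of the form $\max \sum_j \alpha_j p_j - M\int_0^\infty\beta(t)\,dt$ with constraint $\alpha_j-\beta(t)\le \frac{t-a_j}{p_j}+c$, and $\beta(t)=\frac{n(t)}{(4+\epsilon)M}$) matches the paper, which uses P3, the constraint \eqref{new_dual_constraint} with $c=\frac14$, and exactly this $\beta$. The genuine gap is in your setting of $\alpha_j$. You propose to let $\alpha_j$ be the marginal increase in total flowtime caused by job $j$'s arrival under a redundancy-free reference ``Fair'' schedule, mimicking the SRPT+R proof. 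But that construction was only possible for SRPT because of the closed-form transient formula of Lemma~\ref{srpt_optimality_transient}, which pins down exactly which jobs are delayed by an arrival and by how much, and because the Cases I--III feasibility argument exploits SRPT's machine-assignment structure (job $kM+q$ runs on machine $q$). Under Fair/latest-arrival sharing no such closed form exists: an arrival changes the shares of \emph{all} currently served jobs, the effect depends on the entire future arrival sequence, and there is no analogue of \eqref{dual_variable_1} you could even write down, let alone verify against $\frac{t-a_j}{p_j}+\beta(t)+c$ by a departure-counting case analysis. Moreover, the comparison you invoke between the reference schedule and Fair+R is stated in the direction that does not close the chain: to conclude $\mathrm{dual}\ \ge\ \Omega(\epsilon)\cdot FR$ from a bound in terms of the reference cost you would need the reference (no-redundancy) cost to dominate $FR$, and that monotonicity itself requires an argument.

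The paper sidesteps both problems by abandoning the ``marginal flowtime'' interpretation altogether and defining $\alpha_j$ directly from the \emph{actual} Fair+R schedule as a time integral of instantaneous rates, \eqref{choice_alpha}--\eqref{dudl_variable_alpha}: when $n(t)\ge M$, $\alpha_j(t)$ aggregates the (augmented) processing rates $\widetilde{h}_k$ of all still-active jobs $k$ with $a_k\le a_j$, normalized by $(4+\epsilon)Mp_j$, and when $n(t)<M$ it uses job $j$'s own rate scaled by $\frac{1}{4(4+\epsilon)p_j}$. Feasibility (Lemma~\ref{lemma_checking_fair}) then follows by splitting the jobs that arrived no later than $j$ into those finished by time $t$ (their total contribution is bounded by the aggregate capacity $(4+\epsilon)M(t-a_j)$) and those unfinished at $t$ (each contributes at most $p_j$ of work on $[a_j,c_k]$, giving at most $n(t)p_j$, absorbed by $\beta(t)$); and the objective bound (Lemma~\ref{lemma_dual_performance}) is a direct pointwise computation showing $\sum_j\alpha_j(t)p_j\ge \frac{n(t)}{4}$, yielding $\sum_j\alpha_jp_j - M\int_0^\infty\beta(t)\,dt\ge \frac{\epsilon}{16+4\epsilon}FR$ with no reference schedule needed. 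To repair your proposal you would need to replace your $\alpha_j$ with a rate-based definition of this kind; as written, the feasibility step you flag as ``the main obstacle'' is not merely technical but would fail for lack of an SRPT-style structural handle on the Fair dynamics.
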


\vspace{-.5em}
\begin{algorithm}
\label{Fair+R}
\caption{Fair+R Algorithm}
\While{A job arrives to or departure from the system}
{
Sort the jobs in the order such that $a_1 \leq a_2 \leq \cdots \leq a_{n(t)}$  \;
Compute $n(t) = kM + l$\;
\If{$n(t) \geq M$}
{
\For{$j = l+1, l+2 \cdots, n(t)$}
{$r_{j}(t) = 1$ and 
$x_j(t) = 1/k$\;
}
}
\Else
{
 $r_{n(t)}(t) = M-\lfloor \frac{M}{n(t)} \rfloor (n(t)-1)$ and $x_{n(t)}(t) = 1$\;
 \For{$j = 1,2, \cdots, n(t) - 1$}
{$r_{j}(t) = \lfloor \frac{M}{n(t)} \rfloor $ and 
$x_j(t) = 1$\;
}
}
Checkpoint all jobs and assign job $j$'s redundant executions to $r_j(t)$ machines which are uniformly chosen at random from $\{1,2,\cdots,M\}$ with a resource share of $x_j(t)$;
}
\end{algorithm}
\vspace{-1em}


\subsection{Proof of Theorem \ref{theorem_2}}
Paralleling the proof of Theorem \ref{theorem_1}, we adopt the dual-fitting approach to prove Theorem \ref{theorem_2}. Let $\bm{z}_j = (\bm{t_j},\bm{x_j},\bm{r_j},L_j)$ and $\bm{z} = (\bm{z_j}|j=1,2,\cdots,N)$, we first formulate an approximate optimization problem as follows:
\begin{align*}
  \min_{{\bm{z}}} &  \  \sum_{j=1}^{N}  \int_{a_j}^{\infty}\frac{1}{p_j}({t-a_j + \frac{p_j}{4}}) \widetilde{h}_j(\bm{t_j},\bm{x_j},\bm{r_j},t)dt  \tag{P3} &\\
   s.t. 
  &  \  \int_{a_j}^{\infty}\widetilde{h}_j(\bm{t_j},\bm{x_j},\bm{r_j},t)dt \geq p_j, \ \ \forall j, &\\
  &  \  \sum_{j: a_j \leq t} \sum_{k}x_j^{k}r_j^{k} \cdot \mathbbm{1}{(t \in (t^{k-1}_j,t^{k}_j])} \leq M, \ \ \forall t, &\\
  &   \  L_j \in \{1,2,\cdots,2N-1\},  \ \ \bm{t}_j \in \mathcal{T}_j^{L_j+1}, \ \ \bm{r}_j \in \mathbb{N}^{L_j}, \ \forall j. & \\
  &   \  0 < x_j^{k} \leq 1, \ \ \forall j, 1\leq k \leq L_j. 
\end{align*}

Observe that P3 and P1 differ in both the objective and the second constraint since job $j$ gets a resource share of $x_j^{k}r_j^{k}$ when $t\in(t^{k-1}_{j},t^{k}_j]$.


The dual problem associated with P3 is similar to that of P1 and we only need to modify the first constraint of P2 to yield the following inequality:
\vspace{-.5em}
\begin{equation}
\label{new_dual_constraint}
\alpha_j - \beta(t) \leq \frac{t-a_j}{p_j} + \frac{1}{4} \ \ \forall j;t \geq a_j.
\vspace{-.5em}
\end{equation}
Paralleling Lemma \ref{lemma_approximation}, we have $P4 \leq P3 \leq \big(1+ \frac{1}{4}\Delta \big)\cdot OPT$ where $P4$ and $P3$ are the optimal values of the dual problem and P3 respectively. 


Denote by $A(t)$ the set which contains all jobs that are still active in the cluster at time $t$ under Fair+R. Thus, $n(t) = |A(t)|$.  We shall set $\alpha_j$ as follows:
\vspace{-.5em}
\begin{equation}
\label{choice_alpha}
\alpha_j = \int_{a_j}^{c_j} \alpha_{j}(\tau) d\tau,
\vspace{-.5em}
\end{equation}
 where 
\vspace{-.5em}
\begin{equation}
\label{dudl_variable_alpha}
\begin{split}
\alpha_{j}(t) & = \frac{\sum_{k: a_k \leq a_j} \mathbbm{1}(k \in A(t))  \cdot \mathbbm{1}(n(t) \geq M) \widetilde{h}_k(\bm{t}_k,\bm{x}_{k},\bm{r}_{k},t)}{(4+\epsilon)Mp_j}   \\  & + \frac{\mathbbm{1}({n(t) < M}) \widetilde{h}_j(\bm{t}_j,\bm{x}_j,\bm{r}_{j},t)}{4(4+\epsilon)p_j}. 
 \end{split}
 \vspace{-.5em}
\end{equation}
and the setting of $\beta(t)$ is given by: 
\vspace{-.5em}
\begin{equation}
\label{choice_beta}
\beta(t) = \frac{1}{(4+\epsilon)M} n(t).
\vspace{-.5em}
\end{equation}

Next, we proceed to check the feasibility of these dual variables. Observe that $\alpha_j$ and $\beta(t)$ are nonnegative for all $j,t$ and thus we only need to show they satisfy \eqref{new_dual_constraint}. 
\begin{lemma}
\label{lemma_checking_fair}
The dual variable settings in \eqref{choice_alpha} and \eqref{choice_beta} satisfy the constraint in \eqref{new_dual_constraint}.  
\end{lemma}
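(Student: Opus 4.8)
The plan is to verify the dual constraint $\alpha_j - \beta(t) \le \frac{t-a_j}{p_j} + \frac14$ for every job $j$ and every $t \ge a_j$, mimicking the structure of the non-multitasking proof but tracking the extra resource-share factor $x_j^k$ and the $4$-fold resource augmentation. First I would unwind the definition \eqref{choice_alpha}--\eqref{dudl_variable_alpha}: write $\alpha_j = \int_{a_j}^{c_j}\alpha_j(\tau)\,d\tau$ as a sum of two contributions, the ``congested'' part (integrand active only while $n(\tau)\ge M$, counting the speedup accrued by all jobs at least as old as $j$) and the ``uncongested'' part (integrand active only while $n(\tau)<M$). Then I would split the time horizon at $t$: the portion of $\alpha_j$ earned after $t$ must be bounded by $\beta(t) + \frac14$, while the portion earned during $(a_j,t]$ must be bounded by $\frac{t-a_j}{p_j}$. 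The second bound is the easy half: during $(a_j,t]$ the total speedup delivered to jobs older than $j$ on all $M$ machines is at most $M\delta(t-a_j)$ with $\delta=4+\epsilon$ (Lemma~\ref{lemma_redundancy_progress} gives $\widetilde h \le \Delta t$ per machine, and the resource shares sum to at most $M$ per the second constraint of P3, scaled by the augmentation), so dividing by $(4+\epsilon)Mp_j$ yields at most $\frac{t-a_j}{p_j}$; the uncongested term is handled similarly since when $n(\tau)<M$ job $j$'s own speedup integrates to at most its remaining work, hence $\le p_j$, and the $\frac14$ in the denominator matches the $\frac14$ on the right.

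The delicate half is bounding the ``after $t$'' contribution by $\beta(t)+\frac14 = \frac{n(t)}{(4+\epsilon)M}+\frac14$. Here I would argue that, because Fair+R gives priority to the most recently arrived jobs, every job $k$ with $a_k \le a_j$ that is still active at time $t$ remains ``behind'' $j$ in priority, and the set of such jobs that are counted in $\alpha_j(\tau)$ for $\tau>t$ is a subset of $A(t)$; moreover, whenever $n(\tau)\ge M$ each of these contributes speedup at rate at most... no, more carefully: the key identity is that $\int_t^{c_j}\sum_{k}\mathbbm 1(k\in A(\tau))\,\widetilde h_k(\cdot,\tau)\,d\tau$ counts, for each job $k$ older than $j$ and active at $t$, at most the remaining work of $k$ at time $t$, but summed over at most $n(t)$ such jobs each with remaining work bounded in a way that telescopes against $Mp_j$. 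I expect the cleanest route is: during any interval after $t$ where $n(\tau)\ge M$, the total speedup to the relevant jobs is at most $\delta M$ per unit time, but it can only be ``charged'' for as long as job $j$ itself is alive, and while $j$ is alive and $n(\tau)\ge M$ job $j$ receives resource share $1/k(\tau)\ge 1/ \lceil n(\tau)/M\rceil$, so $j$'s own completion forces $\int_t^{c_j}\widetilde h_j\,d\tau \le p_j$ and simultaneously bounds how long the congested regime can persist relative to the number of older jobs. Converting this into the bound $\frac{n(t)}{(4+\epsilon)M} + \frac14$ is the main obstacle, and I anticipate it requires the Fair-scheduler property that the number of active jobs older than $j$ never exceeds $n(t)$ after $t$ decreases monotonically as those jobs complete, together with a careful accounting that the ``$+M-\lfloor M/n\rfloor(n-1)$'' boundary case (where $j$ may be the unique most-recent job on extra machines) only improves the bound.

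Concretely, the steps in order: (1) decompose $\alpha_j$ into congested/uncongested and into before-$t$/after-$t$ pieces; (2) bound the before-$t$ congested piece by $\frac{t-a_j}{p_j}$ using the P3 resource constraint and the $(4+\epsilon)$ augmentation; (3) bound the uncongested piece (whenever it occurs) by $\frac14$ using $\int \widetilde h_j \le p_j$ and the denominator $4(4+\epsilon)p_j$; (4) bound the after-$t$ congested piece by $\beta(t)$, which is the crux, using that jobs older than $j$ and active at $t$ number at most $n(t)$, each delivers total speedup after $t$ that is dominated by what is needed to finish job $j$ (again via $\int_t^{c_j}\widetilde h_j \ge$ the remaining work being the same order), and the normalization $(4+\epsilon)Mp_j$; (5) combine to get $\alpha_j \le \frac{t-a_j}{p_j} + \beta(t) + \frac14$, which is \eqref{new_dual_constraint}. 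The factor $4$ (rather than the $1$ of Theorem~\ref{theorem_1}) should enter precisely in step (4): multitasking means a server running $k$ jobs slows job $j$ by $1/k$, and bounding the resulting loss costs a constant, which the analysis of LAPS-type algorithms pins at $4$; I would import that constant here via the same telescoping inequality used for Fair/LAPS, now with the speedup function $\widetilde h$ in place of a plain unit rate.
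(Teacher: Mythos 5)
Your overall architecture is sound and close to the paper's: the paper also treats the uncongested term separately (it integrates $\widetilde h_j$ over the job's lifetime and the denominator $4(4+\epsilon)p_j$ gives at most $\tfrac14$), and it also splits the congested contribution into a piece charged to $\tfrac{t-a_j}{p_j}$ via the capacity bound $\sum_k \mathbbm{1}(k\in A(\tau))\mathbbm{1}(n(\tau)\ge M)\,\widetilde h_k(\cdot,\tau)\le (4+\epsilon)M$, and a piece charged to $\beta(t)$. (The paper splits by \emph{jobs} — those with $c_k\le t$ versus $c_k>t$ — rather than by time at $t$; your time-split is an equivalent variant, and your observation that every job contributing after $t$ must lie in $A(t)$, hence at most $n(t)$ of them, is exactly the right counting.)

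The genuine gap is in your step (4), and you flag it yourself as ``the main obstacle'' without resolving it. The missing ingredient is the paper's inequality (ii) in \eqref{check_feasbility_2}: for every job $k$ with $a_k\le a_j$ that is still active at time $t$,
\begin{equation*}
\int_{a_j}^{c_j}\mathbbm{1}(k\in A(\tau))\,\mathbbm{1}(n(\tau)\ge M)\,\widetilde h_k(\bm{t}_k,\bm{x}_k,\bm{r}_k,\tau)\,d\tau \;\le\; p_j .
\end{equation*}
This holds because of the specific structure of Fair+R: when $n(\tau)\ge M$, only the most recently arrived jobs are served, each on a single machine with an identical share $1/k$; since $j$ arrived after $k$, whenever the older job $k$ is being served within $(a_j,c_j]$, job $j$ is also being served with the same allocation, hence the same speedup. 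So the work delivered to $k$ in that window is dominated by the work delivered to $j$ over its whole lifetime, which is exactly $p_j$. Summing over the at most $n(t)$ such jobs and dividing by $(4+\epsilon)Mp_j$ yields precisely $\beta(t)=\tfrac{n(t)}{(4+\epsilon)M}$. Your two attempted substitutes do not deliver this: bounding each $k$'s contribution by ``the remaining work of $k$ at time $t$'' is useless because that quantity is unrelated to $p_j$, and charging total capacity $\delta M$ per unit time over $[t,c_j]$ gives $\tfrac{c_j-t}{p_j}$, which is not controlled by $\beta(t)$ since under Fair+R newly arriving jobs can starve $j$ and make $c_j-t$ arbitrarily large relative to $p_j$. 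Also, your closing speculation that the constant $4$ is imported from a LAPS-type telescoping argument in this step is off the mark: the $\tfrac14$ comes from the modified objective of P3 and the $4(4{+}\epsilon)$ normalization of the uncongested term (its real role appears in Lemma \ref{lemma_dual_performance}), not from the feasibility argument above.
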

\begin{lemma}
\label{lemma_dual_performance}
Under the choice of dual variables in \eqref{choice_alpha} and \eqref{choice_beta},  $\sum_{j=1}^{N}\alpha_j p_j - M\int_{0}^{\infty}\beta(t)dt \geq \frac{\epsilon}{16+4\epsilon}FR$ where $FR$ is the cost of Fair+R.
\end{lemma}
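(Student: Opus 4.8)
The plan is to mirror the corresponding step in the proof of Theorem \ref{theorem_1}, but now accounting for the fact that Fair+R shares resources among active jobs rather than assigning one machine per job. First I would rewrite $\sum_{j=1}^{N}\alpha_j p_j$ by substituting the definitions \eqref{choice_alpha} and \eqref{dudl_variable_alpha}, cancelling the factor $p_j$, and interchanging the order of summation and integration. Since $\alpha_j = \int_{a_j}^{c_j}\alpha_j(\tau)d\tau$ and $\alpha_j(t)$ is a sum of two terms — one active only when $n(t)\geq M$ and one active only when $n(t)<M$ — the product $\sum_j \alpha_j p_j$ becomes an integral over time of a quantity that, at each time $t$, sums $\widetilde{h}_k(\cdot,t)$ over pairs $(j,k)$ with $a_k\leq a_j$ and $k\in A(t)$. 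The key observation is that for any job $k\in A(t)$, the number of jobs $j$ with $a_k\leq a_j$ and $a_j\leq t$ (so that $\alpha_j(t)$ is ``live'') is exactly the number of active jobs that arrived no earlier than $k$; summing the normalized speedup $\widetilde{h}_k/( (4+\epsilon)M)$ over all such pairs will, after using the completion identity \eqref{job_completion_time_sharing} that $\int_{a_k}^{c_k}\widetilde{h}_k(\cdot,\tau)d\tau = p_k$, reconstruct something comparable to $\sum_k (\text{position of }k)\cdot p_k$, which is a lower bound for $FR$ up to the resource-augmentation factor.

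Concretely, I expect to show that $\sum_{j=1}^N \alpha_j p_j \geq \frac{1}{4+\epsilon}FR$ (or a similar constant), by arguing that the first term of $\alpha_j(t)$, when $n(t)\geq M$, contributes $\frac{1}{(4+\epsilon)M}\sum_{k\in A(t), a_k\leq a_j}\widetilde{h}_k(\cdot,t)$ and that summing over $j$ active at $t$ with $a_j\geq a_k$ reproduces the ``LAPS-style'' accounting of flowtime: roughly $\int_0^\infty \frac{n(t)}{4+\epsilon}\,dt$ plus corrections from the low-load term. Since $\int_0^\infty n(t)\,dt = FR$, and since $M\int_0^\infty \beta(t)\,dt = \frac{1}{4+\epsilon}\int_0^\infty n(t)\,dt = \frac{1}{4+\epsilon}FR$, the difference $\sum_j \alpha_j p_j - M\int_0^\infty \beta(t)\,dt$ will be a positive fraction of $FR$; carefully tracking the constants (the $\frac14$ in the second term of $\alpha_j(t)$, and the fact that FR may run up to $4$ times faster than the reference due to speed augmentation) is what produces the stated coefficient $\frac{\epsilon}{16+4\epsilon}$.

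The main obstacle I anticipate is the bookkeeping in the double sum: when $n(t)\geq M$ one must verify that $\sum_{j: a_j\geq a_k, j\in A(t)}\frac{\widetilde h_k(\cdot,t)}{(4+\epsilon)M} = \frac{\widetilde h_k(\cdot,t)}{(4+\epsilon)M}\cdot|\{j\in A(t): a_j\geq a_k\}|$ telescopes correctly against $\int_0^\infty \frac{n(t)}{(4+\epsilon)M}\cdot(\text{something})\,dt$, and in particular that the total speedup delivered at time $t$, $\sum_{k\in A(t)}\widetilde h_k(\cdot,t)$, is at most $M$ times the per-machine rate — which follows from the resource constraint $\sum_{k}x_k^m r_k^m\mathbbm{1}(\cdot)\leq M$ together with Lemma \ref{lemma_redundancy_progress}. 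The boundary regime $n(t)<M$ (no multitasking, Job $n(t)$ on many machines) must be handled separately, using the second term of $\alpha_j(t)$ and the fact that in this regime each active job makes progress at rate at least $1$ (unit speed), so the flowtime accumulated during low-load periods is also captured. Once these two regimes are combined and the augmentation factor is inserted, the inequality $\sum_{j=1}^N \alpha_j p_j - M\int_0^\infty\beta(t)\,dt \geq \frac{\epsilon}{16+4\epsilon}FR$ follows by elementary algebra.
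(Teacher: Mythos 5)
Your outline gets the easy half right and matches the paper's strategy in broad strokes (compute $M\int_0^\infty\beta(t)\,dt=\tfrac{1}{4+\epsilon}\int_0^\infty n(t)\,dt=\tfrac{FR}{4+\epsilon}$, exchange $\sum_j\int$, and treat the regimes $n(t)\geq M$ and $n(t)<M$ separately), but it does not actually close the argument, and the provisional constants you state would make it vacuous. The whole content of the lemma is that $\sum_j\alpha_j p_j$ exceeds $\tfrac{FR}{4}$, i.e.\ strictly more than the $\tfrac{FR}{4+\epsilon}$ coming from the $\beta$-term; the $\epsilon$ in the conclusion is exactly the gap $\tfrac14-\tfrac{1}{4+\epsilon}=\tfrac{\epsilon}{4(4+\epsilon)}$. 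You instead aim at ``$\sum_j\alpha_j p_j\geq\tfrac{1}{4+\epsilon}FR$ (or a similar constant)'' and, pointwise, at ``roughly $\int_0^\infty\tfrac{n(t)}{4+\epsilon}dt$'': with those constants the difference is only $\geq 0$ and the lemma does not follow. What has to be proved (and is a short concrete calculation) is the pointwise bound $\sum_j\alpha_j(t)p_j\geq\tfrac{n(t)}{4}$. When $n(t)=\kappa M+l\geq M$, Fair+R gives each of the $\kappa M$ most recent jobs one machine with share $1/\kappa$, so under the $(4+\epsilon)$-speed augmentation $\widetilde{h}_k(\cdot,t)\geq\tfrac{4+\epsilon}{\kappa}$ for every scheduled $k$; the augmentation factor cancels the $(4+\epsilon)$ in the denominator of $\alpha_j(t)$, giving $\alpha_j(t)p_j=\tfrac{j-l}{\kappa M}$ and the triangular sum $\sum_{j=l+1}^{n(t)}\tfrac{j-l}{\kappa M}=\tfrac{\kappa M+1}{2}\geq\tfrac{n(t)}{4}$ (since $l<M\leq\kappa M$). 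When $n(t)<M$, each active job runs on at least one dedicated machine, so $\widetilde{h}_j(\cdot,t)\geq 4+\epsilon$ and the second term of $\alpha_j(t)p_j$ is at least $\tfrac14$ per job; your ``rate at least $1$'' would only give $\tfrac{1}{4(4+\epsilon)}$ per job, which is again too weak. Integrating $\sum_j\alpha_j(t)p_j\geq\tfrac{n(t)}{4}$ and subtracting $\tfrac{FR}{4+\epsilon}$ yields $\tfrac{\epsilon}{16+4\epsilon}FR$.

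Two of the tools you invoke also point in the wrong direction for this lemma. The aggregate resource constraint $\sum_k x_k^m r_k^m\mathbbm{1}(\cdot)\leq M$ is an \emph{upper} bound on total delivered speedup and is what the feasibility proof (Lemma~\ref{lemma_checking_fair}) needs; here you need per-job \emph{lower} bounds on $\widetilde{h}_k$ for the jobs Fair+R actually schedules, as above. Likewise the completion identity $\int_{a_k}^{c_k}\widetilde{h}_k\,d\tau=p_k$ and the ``$\sum_k(\text{position of }k)\cdot p_k$'' accounting belong to the SRPT analysis of Theorem~\ref{theorem_1}; for Fair+R the correct accounting is purely pointwise in time against $n(t)$, since $FR=\int_0^\infty n(t)\,dt$. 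With the pointwise bound and the explicit constants supplied, your skeleton becomes the paper's proof; without them, the key inequality is asserted rather than established.
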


Lemma \ref{lemma_dual_performance} implies that $FR \leq \frac{(16+4\epsilon)P4}{\epsilon}  \leq \frac{16+4\epsilon}{\epsilon} \cdot \big(1+\frac{1}{4}\Delta \big)\cdot OPT = O(\frac{1}{\epsilon})OPT$. This completes the proof of Theorem \ref{theorem_2}. \qed

\subsection{LAPS+R($\beta$) Algorithm and the performance guarantee} 
The algorithm depends on the parameter $\beta \in (0,1)$. Say, $\beta = 1/2$, then the algorithm essentially schedules the $\frac{1}{2}n(t)$ most recently arrived jobs. If there are fewer than $M$ such jobs, they are each assigned an roughly equal number of servers for execution without multitasking. If $\frac{1}{2}n(t) > M$, each job will roughly get a share of $\frac{M}{\frac{1}{2} n(t)}$ on some machine.


For a given number of active jobs $n(t)$, and parameter $\beta$, $z \in \mathbbm{N}$, $\alpha \in \{0,1,\cdots,{M-1}\}$ and $\gamma \in [0,1)$ such that $\beta n(t) =  zM + \alpha + \gamma$.

The LAPS+R($\beta$) Algorithm operates as follows. At time $t$, if $z = 0$, jobs indexed from $({n(t)- \alpha})$ to $({n(t) - 1})$ are scheduled on $\lfloor \frac{M}{\alpha+1} \rfloor $ machines each, and Job ${n(t)}$ is scheduled on the remaining $(M - \alpha \lfloor \frac{M}{\alpha + 1} \rfloor)$ machines. In this case, there is no multitasking. By contrast, if $z\geq 1$, jobs indexed from $({n(t)-zM-\alpha})$ to $({n(t)-1})$ are each assigned a single machine and get a resource share of $\frac{1}{z+1}$. And, Job ${n(t)}$ is scheduled on $(M-\alpha)$ machines with a $\frac{1}{z+1}$ share of its resources. 

The corresponding pseudo-code is exhibited as Algorithm \ref{LAPS+R} in the panel below.

\begin{algorithm}
\label{LAPS+R}
\caption{LAPS+R($\beta$) Algorithm}
\While{A job arrives at or departure from the system}
{
Sort the jobs in the order such that $a_1 \geq a_2 \geq \cdots \geq a_{n(t)}$\;
Compute $\beta n(t) = zM + \alpha + \gamma$ where $\gamma < 1$ and $\alpha < M$\;
\If{$z \geq 1$}{
$r_{n(t)}(t) = (M-\alpha)$ and $x_{n(t)}(t) = \frac{1}{z+1}$\;
\For{$j = n(t) - zM - \alpha, \cdots, n(t) - 1$}
{$r_{j}(t) = 1$ and 
$x_j(t) = \frac{1}{z+1}$\;
}
}
\If{$z < 1$}{
$r_{n(t)}(t) = M-\alpha \lfloor \frac{M}{\alpha + 1} \rfloor$ and $x_{n(t)}(t) = 1$\;
\For{$j = n(t) - \alpha, \cdots, n(t) - 1$}
{$r_{j}(t) = \lfloor \frac{M}{\alpha + 1} \rfloor$ and 
$x_j(t) = 1$\;
}
}
\For{$j = 1,2,\cdots,n(t) - zM - \alpha - 1$}{$x_j(t) = r_j(t)  = 0$\;}
Checkpoint all jobs and assign job $j$'s redundant executions to $r_j(t)$ machines which are uniformly chosen at random from $\{1,2,\cdots,M\}$ with a resource share of $x_j(t)$;
}
 \vspace{-.5em}
\end{algorithm}


\subsubsection{Performance guarantee for LAPS+R($\beta$) and our techniques}
Let $OPT$ and $LR$ denote the cost of the optimal scheduling policy and LAPS+R($\beta$) respectively. The main result in this section, characterizing the competitive performance of LAPS+R($\beta$), is given in the following theorem:
\begin{theorem}
\label{theorem_3}
LAPS+R($\beta$) is {($2 + 2\beta + 2\epsilon)$-speed $O(\frac{1}{\beta \epsilon})$-competitive} 
with respect to the total job flowtime. 
\end{theorem}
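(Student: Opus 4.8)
The plan is to establish Theorem~\ref{theorem_3} via a potential function argument, in contrast to the dual-fitting used for SRPT+R and Fair+R. The reason is that LAPS+R($\beta$) only shares resources among a $\beta$-fraction of the active jobs, so a simple charging of flowtime increments against per-job dual variables no longer yields a feasible dual; instead, one tracks the evolution of a suitably designed Lyapunov/potential function $\Phi(t)$ that couples the state of the algorithm (the remaining work of the active jobs, ordered by arrival time) with the state of the optimal offline schedule. Concretely, I would define $\Phi(t)$ as a weighted sum over active jobs of the form $\Phi(t) = c \sum_{j \in A(t)} \mathrm{rank}_j(t)\,\big(p_j^{\mathrm{ALG}}(t) - p_j^{\mathrm{OPT}}(t)\big)_+$ (up to the right normalization by $p_j$ and $M$), where $\mathrm{rank}_j(t)$ counts how many active jobs arrived no earlier than $j$, mirroring the LAPS potential functions in \cite{scalably-scheduling,SRPT-redundancy-sharing} but with the crucial modification that the ``speed'' a job receives is governed by the speedup function $\widehat h_j(\bm t_j,\bm x_j,\bm r_j,t)$ rather than a fixed sublinear parallelizability curve, and with resource augmentation factor $\delta = 2+2\beta+2\epsilon$ baked into the algorithm's service rate.

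The key steps, in order, are: (i) a boundary/continuity check — verify $\Phi(0)=0$, $\Phi(\infty)=0$, and that $\Phi$ does not jump upward at job arrivals or departures (an arrival of job $j$ adds a term that is zero since $p_j^{\mathrm{ALG}}(a_j)=p_j^{\mathrm{OPT}}(a_j)=p_j$, and increasing other jobs' ranks only helps because the potential enters with a negative running-condition sign in the amortized inequality; at checkpoints $\Phi$ is unaffected because checkpointing only redistributes progress among copies of the same job without changing $p_j^{\mathrm{ALG}}(t)$); (ii) the ``running condition'' — show that at almost every time $t$, $\frac{d}{dt}\big(\mathrm{cost}_{\mathrm{ALG}}(t) + \Phi(t)\big) \le O\!\big(\tfrac{1}{\beta\epsilon}\big)\,\frac{d}{dt}\mathrm{cost}_{\mathrm{OPT}}(t)$, where $\frac{d}{dt}\mathrm{cost}_{\mathrm{ALG}}(t)=n(t)$ and $\frac{d}{dt}\mathrm{cost}_{\mathrm{OPT}}(t)=n^{\mathrm{OPT}}(t)$; (iii) integrate over $[0,\infty)$ and use the boundary conditions to conclude $LR \le O(\tfrac{1}{\beta\epsilon})\,OPT$. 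For step (ii), I would split into the two regimes of the algorithm. When $\beta n(t) \ge M$ (the genuinely multitasking regime), the $\beta$-fraction of most-recently-arrived jobs each receive service rate at least $\frac{(2+2\beta+2\epsilon)}{z+1}\cdot h_j(\cdot) \ge \frac{\delta}{z+1}$ with $z+1 \le \frac{\beta n(t)}{M}+1$, so the total algorithm-side service rate is $\Theta(\delta M)$ concentrated on high-rank jobs; this drives the negative drift of $\Phi$. The adversary, meanwhile, can decrease $p_j^{\mathrm{OPT}}$ for at most its own active jobs at total rate $M$, contributing a positive term bounded using $\mathrm{rank}_j(t)\le n(t)$. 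The arithmetic is arranged so that when $n(t) \gg n^{\mathrm{OPT}}(t)$ the negative drift from the algorithm's own processing dominates $n(t)$, and when $n(t) = O(n^{\mathrm{OPT}}(t))$ the bare cost term $n(t)$ is already $O(n^{\mathrm{OPT}}(t))$.

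The main obstacle I anticipate is handling the non-multitasking regime $\beta n(t) < M$ correctly and stitching it to the multitasking regime, together with the fact that the speedup function $h_j$ is \emph{not} causal and is only bounded via Lemmas~\ref{lemma_redundancy_concave}--\ref{lemma_redundancy_progress} (i.e., $g(r,t)\le\min\{\Delta t, rt\}$ with diminishing returns). In the regime $\beta n(t)<M$ there is genuine redundancy ($r_j(t)=\lfloor M/(\alpha+1)\rfloor \ge 1$), so each scheduled job can make progress at a rate exceeding $1$; one must show the \emph{aggregate} progress rate on the high-rank jobs is still $\Omega(M)$ despite the concavity of $g$, and that the adversary — who also has only $M$ machines and unit-mean capacity — cannot do better by more than the claimed factor. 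The delicate point is that redundancy helps the algorithm's progress rate by at most the peak-to-mean ratio $\Delta$ per job, but since the potential function and the cost are both measured against the \emph{mean}-based speedup $\widehat h_j$, and the OPT baseline is scaled by the same $\widehat h_j$ (as stated in the resource-augmentation setup following Definition~2), these $\Delta$ factors cancel rather than entering the competitive ratio; making this cancellation rigorous inside the drift computation — and verifying the potential still has no upward jumps when a job transitions between the two regimes as $n(t)$ changes by one — is where the real work lies. I expect the final bound to emerge with the $\frac{1}{\beta}$ from the worst-case value of $z+1 \approx \beta n(t)/M$ appearing in the denominator of the algorithm's drift, and the $\frac{1}{\epsilon}$ from the usual resource-augmentation slack $\delta - (2+2\beta) = 2\epsilon$.
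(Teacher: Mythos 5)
Your overall strategy coincides with the paper's: an amortized potential-function analysis with a LAPS-style ``lag'' potential, the same augmentation $\delta = 2+2\beta+2\epsilon$, a case split on $\beta n(t)\ge M$ versus $\beta n(t)<M$, and integration of a drift inequality of the form $\frac{d\Lambda(t)}{dt} \le -\epsilon\beta\, n(t) + c\,|\psi^*(t)|$. The paper's potential is $\Lambda(t)=\sum_{j\in\psi(t)} \pi_j(t)/\big(\delta f(n_j(t))\big)$, where $\pi_j(t)$ is the positive part of OPT's-minus-algorithm's cumulative service on job $j$ (your $(p_j^{\mathrm{ALG}}(t)-p_j^{\mathrm{OPT}}(t))_+$, \emph{not} normalized by $p_j$), and $1/f(n_j(t))=\max\{1,\ \beta n_j(t)/M\}$ with $n_j(t)$ the number of active jobs that arrived \emph{no later} than $j$.

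There is, however, a genuine gap in your jump analysis. You weight job $j$ by $\mathrm{rank}_j(t)=$ the number of active jobs that arrived \emph{no earlier} than $j$, and you assert that at an arrival ``increasing other jobs' ranks only helps.'' It does not: with your orientation, every existing active job's rank increases by one at each arrival, so $\Phi$ jumps \emph{upward} by $c\sum_j (\text{lag}_j)_+$, an amount that cannot be charged to $\frac{d}{dt}\mathrm{cost}_{\mathrm{OPT}}$ and violates the jump condition the argument needs ($\Phi$ must not increase at arrivals or departures). The orientation must be reversed, exactly as in the paper: weight $j$ by (essentially) the count of active jobs that arrived no later than $j$, scaled by $\beta/M$ and floored at $1$ (the factor $1/f(n_j(t))$ above), so that an arrival leaves all existing weights unchanged (the newcomer arrived last and enters with zero lag) and a departure can only decrease the weights. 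With that fix your drift computation proceeds as you sketch: the $\approx\beta n(t)$ most recent jobs each receive rate at least $\delta/(z+1)$ and carry weight $\approx \beta n_j(t)/M$, giving negative drift about $-\beta(\tfrac12-\tfrac{\beta}{4})n(t)$, while OPT's total rate $M$ with weights at most $\max\{1,\beta n(t)/M\}$ contributes at most $\beta n(t)/\delta + \Delta|\psi^*(t)|/\delta$, and $\delta(\tfrac12-\tfrac{\beta}{4})\ge 1+\tfrac{\epsilon}{2}$ closes the bound. Two smaller corrections: do not normalize the lag by $p_j$ (the raw-work lag is what makes the algorithm-side drift comparable to $n(t)$), and the peak-to-mean factor $\Delta$ does not cancel --- it survives into the constant (the paper obtains $LR \le \frac{2(\delta+1)\Delta}{\epsilon\beta}\,OPT$), which is harmless only because $\Delta$ is treated as a fixed system parameter.
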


The dual fitting approach fails in this setting so we adopt the use of a potential function, which is widely used to derive performance bounds with resource augmentation for online parallel scheduling algorithms e.g., \cite{competitive,scalably-scheduling}. The main idea of this method is to find a potential function which combines the optimal schedule and LAPS+R($\beta$). To be specific, let $LR(t)$ and $OPT(t)$ denote the accumulated job flowtime under LAPS+R($\beta$) with a ($2 + 2\beta + 2\epsilon)$-speed resource augmentation and the optimal schedule, respectively. We define a potential function, $\Lambda(t)$, which satisfies the following properties:
\begin{enumerate}
\item Boundary Condition: $\Lambda(0) = \Lambda(\infty) = 0$.
\item Jumps Condition: the potential function may have jumps only when a job arrives or completes under the LAPS+R($\beta$) schedule, and if present, it must be decreased. 
\item Drift Condition: with a ($2 + 2\beta + 2\epsilon)$-speed resource augmentation, for any time $t$ not corresponding to a jump, and some constant $c$, we have that,
\begin{equation}
\label{sample_bound}
 \frac{d\Lambda(t)}{dt} \leq - {\epsilon \beta} \cdot \frac{dLR(t)}{dt} + {c} \cdot \frac{dOPT(t)}{dt}.
 \end{equation}
\end{enumerate}
By integrating \ref{sample_bound} and accounting for the negative jump and the boundary condition, one can see that the existence of such a potential function guarantees that $LR \leq \frac{c}{\beta \epsilon} OPT$ under a ($2 + 2\beta + 2\epsilon)$-speed resource augmentation.

\subsection{Proof of Theorem \ref{theorem_3}}
To prove Theorem \ref{theorem_3}, we shall propose a potential function, $\Lambda(t)$, which satisfies all the three properties specified above.  

\subsubsection{Defining potential function, $\Lambda(t)$}
Consider a checkpointing trajectory for job $j$ under LAPS+R($\beta$) and the optimal schedule, denoted by $(\bm{t}_j,\bm{x}_j,\bm{r}_j)$ and $(\bm{t}_j^*$,$\bm{x}_j^*,\bm{r}_j^*)$ respectively. Let $\psi^* (t)$ be the jobs that are still active at time $t$ under the optimal scheduling and denote by $\psi(t)$ the set of jobs that are active under LAPS+R($\beta$). Thus, we have that $|\psi(t)| = n(t)$. Further let $n_j(t)$ denote the number of jobs which are active at time $t$ and arrive no later than job $j$ under LAPS+R($\beta$). Define the cumulative service difference between the two schedules for job $j$ at time $t$, i.e., $\pi_j(t)$, as follows:
\begin{equation}
\label{numator_potential}
\pi_j(t) = \max \Big[\int_{a_j}^{t} \widetilde{h}_j(\bm{t}_j^*, \bm{x}_j^*,\bm{r}_j^*,\tau)d\tau - \int_{a_j}^{t} \widetilde{h}_j(\bm{t}_j,\bm{x}_j,\bm{r}_j,\tau)d\tau ~,~ 0 \Big],
\end{equation}
Let $\delta = 2 + 2\beta + 2\epsilon$ and define
\vspace{-.3em}
\begin{equation}
\label{denominator_potential}
f(n_j(t)) = \left\{\begin{array}{cc}
1 & \ \beta n_j(t) \leq M, \\
\frac{M}{\beta n_j(t)} & \mbox{otherwise}.
\end{array}\right.
\vspace{-.3em}
\end{equation}
Note that $f(n_j(t))$ takes the minimum of 1 and $\frac{M}{\beta n_j(t)}$ where the latter is roughly the total resource allocated to job $j$ under LAPS+R($\beta$) if $n_j(t)$ jobs were active at time $t$.

Our potential function is given by:
\vspace{-.3em}
\begin{equation}
\label{definition_potential_funciton_single}
\Lambda (t) =  \sum_{j \text{} \in \psi(t)} \Lambda_j (t),
\vspace{-.3em}
\end{equation}
where $\Lambda_j(t)$ is the ratio between \eqref{numator_potential} and \eqref{denominator_potential}, i.e., 
\vspace{-.3em}
\begin{equation*}
\Lambda_j (t) = \frac{\pi_j(t)}{\delta \cdot f(n_j(t))}.
\end{equation*}

\subsubsection{Changes in $\Lambda(t)$ caused by job arrival and departure}
Clearly, our potential function satisfies the boundary condition. Indeed, since each job is completed under LAPS+R($\beta$), 
thus, $\psi(t)$ will eventually be empty, and $\Lambda (0) = \Lambda(\infty) = 0.$

Let us consider possible jump times. When job $j$ arrives to the system at time $a_j$, $\pi_j(a_j) = 0$ and $f(n_j(t))$ does not change for all $k \neq j$. Therefore, we conclude that the job arrival does not change the potential function $\Lambda(t)$. When a job leaves the system under LAPS+R($\beta$), $f(n_j(t))$ can only increase if job $j$ is active at time $t$, leading to a decrease in $\Lambda_j (t)$. As a consequence, the job arrival or departure does not cause any increase in the potential function, $\Lambda (t)$, thus, the jump condition on the potential function is satisfied.

\subsubsection{Changes of $\Lambda(t)$ caused by job processing}
Beside job arrivals and departures under LAPS+R($\beta$), there are no other events leading to changes in $f(n_j(t))$ and thus changes in $\Lambda_j (t)$ depend only on $\pi_j(t)$, see definition of $\Lambda_j (t)$ in \eqref{definition_potential_funciton_single}. 
Specifically, for all $t \notin \{a_j\}_j \cup \{c_j\}_j$, we have that, $$\frac{d\Lambda (t)}{dt} = \sum_{j  \in \psi(t)} \frac{d \Lambda_j (t)}{dt} = \sum_{j  \in \psi(t)} \frac{{d\pi_j(t)}/{dt}}{\delta \cdot f(n_j(t))},$$ 
where we let $\frac{d\pi_j(t)}{dt} = \lim_{\tau  \rightarrow 0^{+}}\frac{\pi_j(t + \tau) - \pi_j(t)}{\tau}$ and thus $\frac{d\pi_j(t)}{dt}$ exists for all $t \geq 0$. Moreover, we have: 
\begin{equation}
\label{bound_partial_potential}
\begin{split}
\frac{d\pi_j(t)}{dt}  &
\leq \mathbbm{1}(j \in \psi^* (t)){ \widetilde{h}_j(\bm{t}_j^*, \bm{x}_j^*,\bm{r}_j^*,t)} \\ & \quad - \mathbbm{1}(j \notin \psi^* (t)){\widetilde{h}_j(\bm{t}_j, \bm{x}_j,\bm{r}_j,t)},
\end{split}
\end{equation}
indeed, either $j \in \psi^* (t)$ so job $j$ has not completed under the optimal policy and the drift is bounded by the first term in \eqref{bound_partial_potential}, or $j \notin \psi^* (t)$ and the job has completed under the optimal policy, the difference term in \eqref{numator_potential} is positive and its derivative is given by the the second term in \eqref{bound_partial_potential}. Therefore, for all $t \notin \{a_j\}_j \cup \{c_j\}_j$, we have the following upper bound:

\vspace{-.3em}
\begin{equation}
\label{dynamic_potential}
\begin{split}
\frac{d\Lambda (t)}{dt} & \leq \sum_{j  \in \psi(t)} \frac{\mathbbm{1}(j \in \psi^* (t)){ \widetilde{h}_j(\bm{t}_j^*, \bm{x}_j^*,\bm{r}_j^*,t)}  }{\delta \cdot f(n_j(t))}  \\ &  \qquad - \sum_{j  \in \psi(t)}\frac{\mathbbm{1}(j \notin \psi^* (t)) { \widetilde{h}_j(\bm{t}_j, \bm{x}_j,\bm{r}_j,t)} }{\delta \cdot f(n_j(t))} \\ & \leq \underbrace{\sum_{j  \in \psi^* (t)} \frac{{ \widetilde{h}_j(\bm{t}_j^*, \bm{x}_j^*,\bm{r}_j^*,t)} }{\delta \cdot f(n_j(t))}}_{\Gamma^* (t)} \underbrace{- \sum_{j \in \psi(t)\setminus\psi^* (t) } \frac{{ \widetilde{h}_j(\bm{t}_j, \bm{x}_j,\bm{r}_j,t)} }{\delta \cdot f(n_j(t))}}_{\Gamma (t)},
\end{split}
\vspace{-.3em}
\end{equation}
where $\psi(t)\setminus\psi^* (t) $ contains all the jobs that are in $\psi(t)$ but not in $\psi^* (t)$. For ease of illustration, let $\Gamma^* (t)$ and $ \Gamma (t)$ denote the two terms on the R.H.S. of \eqref{dynamic_potential}. In the sequel, we bound these two terms.

\subsubsection{An upper bound of $\Gamma^* (t)$}
When $\beta n_j(t) \geq M$, we have $f(n_j(t)) = M/\beta n_j(t)$, thus, $\frac{\widetilde{h}_j(\bm{t}_j^*,\bm{x}_j^*,\bm{r}_j^*,t)}{f(n_j(t))} = \frac{\widetilde{h}_j(\bm{t}_j^*,\bm{x}_j^*,\bm{r}_j^*,t)}{M/\beta n_j(t)}$. By contrast, when $\beta n_j(t) \leq M$, it follows that $f(n_j(t)) = 1$, so $\frac{\widetilde{h}_j(\bm{t}_j^*,\bm{x}_j^*,\bm{r}_j^*,t)}{f(n_j(t))} = {\widetilde{h}_j(\bm{t}_j^*,\bm{x}_j^*,\bm{r}_j^*,t)}$, which is upper bounded by $\Delta$ based on Lemma \ref{lemma_redundancy_progress}. 

Therefore, we have: $$\frac{\widetilde{h}_j(\bm{t}_j^*,\bm{x}_j^*,\bm{r}_j^*,t)}{\delta f(n_j(t))} \leq  \frac{1}{\delta} \Big(\frac{\widetilde{h}_j(\bm{t}_j^*,\bm{x}_j^*,\bm{r}_j^*,t)}{M/\beta n_j(t)} + \Delta \Big ),$$ and 
\begin{equation}
\label{bound_job_processing_optimal}
\begin{split}
\Gamma^*(t) & \leq \sum_{j \in \psi^* (t)} \frac{1}{\delta} \Big(\frac{\widetilde{h}_j(\bm{t}_j^*,\bm{x}_j^*,\bm{r}_j^*,t)}{M/\beta n_j(t)} + \Delta \Big ) \\ & \leq \frac{\Delta |\psi^* (t)|}{\delta} + \sum_{j \in \psi^* (t)} \beta n(t) \frac{\widetilde{h}_j(\bm{t}_j^*,\bm{x}_j^*,\bm{r}_j^*,t)}{\delta M}   \\ & \leq \Delta |\psi^* (t)|/\delta + \beta n(t)/\delta,
\end{split}
\end{equation}
where the last inequality is due to $$\sum_{j \in \psi^* (t)} {\widetilde{h}_j(\bm{t}_j^*,\bm{x}_j^*,\bm{r}_j^*,t)} \leq \sum_{j \in \psi^* (t)}\sum_{k}x_j^{k}r_j^{k} \cdot \mathbbm{1}{(t \in (t^{k-1}_j,t^{k}_j])}  \leq M,$$ for all $t$. 

\subsubsection{An upper bound of $\Gamma (t)$}
First, $\Gamma (t)$ ban be represented as:
\begin{equation}
\label{bound_first_case_potential_function}
\begin{split}
\Gamma (t) =   \sum_{j \in \psi^* (t) \cap \psi(t)} \frac{\widetilde{h}_j(\bm{t}_j,\bm{x}_j,\bm{r}_{j},t)}{\delta f(n_j(t))} - \sum_{j \in \psi(t)} \frac{\widetilde{h}_j(\bm{t}_j,\bm{x}_j,\bm{r}_{j},t)}{\delta f(n_j(t))}.
\end{split}
\end{equation}
To get an upper bound of $\Gamma (t)$, we will consider two cases. In particular, $\beta n(t) =  zM + \alpha + \gamma$, we consider the case where $z = 0$ and that where $z \geq 1$. 

\vspace{.2em}
\textbf{Case 1:} Suppose $z = 0$, in this case, $\lceil \beta n(t) \rceil \leq M$. Since $n_j(t) \leq n(t)$ for all $1 \leq j \leq n(t)$, it then follows that $\beta n_j(t) \leq M$ and $f(n_j(t)) = 1$, which implies, for all $j \in \psi^* (t) \cap \psi(t)$, $\frac{\widetilde{h}_j(\bm{t}_j,\bm{x}_j,\bm{r}_{j},t)}{\delta f(n_j(t))} \leq \Delta$ since $\widetilde{h}_j(\bm{t}_j,\bm{x}_j,\bm{r}_{j},t) \leq \delta \Delta$ as we are using a $\delta$-speed resource augmentation. Thus, the first term on the R.H.S. of \eqref{bound_first_case_potential_function} is upper bounded by: 
\begin{equation}
\label{bound_first_term_RHS}
\sum_{j \in \psi^* (t) \cap \psi(t)} \frac{\widetilde{h}_j(\bm{t}_j,\bm{x}_j,\bm{r}_{j},t)}{\delta f(n_j(t))} \leq \sum_{j \in \psi^* (t) \cap \psi(t)} \Delta \leq   |\psi^* (t)|\Delta.
\end{equation}

Consider $j \in \psi(t)$ where ${n(t)- \alpha} \leq j \leq n(t)$ and $t \in [t^{k-1}_j,t^k_j)$ for some $k \in \{1,2,\cdots,L_j\}$. Then, the number of redundant executions for job $j$, $r^k_j \geq \lfloor \frac{M}{\alpha+1} \rfloor \geq 1$. Thus, $\widetilde{h}_j(\bm{t}_j,\bm{x}_j,\bm{r}_{j},t) \geq \delta$ and $\frac{\widetilde{h}_j(\bm{t}_j,\bm{x}_j,\bm{r}_{j},t)}{\delta f(n_j(t))} \geq 1$. Combining \eqref{bound_first_case_potential_function} and \eqref{bound_first_term_RHS}, we then have:
\begin{equation}
\label{split_1}
\begin{split}
\Gamma (t) & \leq  \Delta |\psi^* (t)| - (\alpha+1) \leq  \Delta |\psi^* (t)| - \beta n(t),
\end{split}
\end{equation}

\vspace{.2em}
\textbf{Case 2:} Suppose $z \geq 1$, then, $\lceil \beta n(t) \rceil > M$ and $\frac{M}{\beta n(t)} \geq \frac{1}{z+1}$. Similarly, we consider job $j \in \psi(t)$ where $n(t)-kM-\alpha \leq j \leq n(t)$ and $t \in (t_j^{k-1},t_j^{k}]$. Based on the scheduling policy of LAPS+R($\beta$), 
we have that,  $x^k_j = \frac{1}{z+1}$ and $r^k_j \geq 1$. Therefore, $\widetilde{h}_j(\bm{t}_j,\bm{x}_j,\bm{r}_{j},t)$ is bounded by:
\begin{equation}
\label{bounding_h_potential}
\frac{\delta}{z+1} \leq \widetilde{h}_j(\bm{t}_j,\bm{x}_j,\bm{r}_{j},t) \leq \frac{\delta \cdot \Delta}{z+1}. 
\end{equation}

Moreover, we have: $ \min[1~,~\frac{M}{\beta n_j(t)}] \geq \min[1~,~\frac{M}{\beta n(t)}] \geq \frac{1}{z+1}$. Therefore, it follows that:
\begin{equation}
\label{bounding_f_potential}
\frac{1}{z+1} \leq \min \big[1~,~\frac{M}{\beta n_j(t)}\big] \leq  f(n_j(t))   \leq \frac{M}{\beta n_j(t)}. 
\end{equation}
Combining \eqref{bounding_h_potential} and \eqref{bounding_f_potential}, we have that, for all $n(t)-kM-\alpha \leq j \leq n(t) -1$, 
\begin{equation}
\label{term_bound_potential}
 \frac{1/(z+1)}{M/\beta n_j(t)} \leq \frac{\widetilde{h}_j(\bm{t}_j,\bm{x}_j,\bm{r}_{j},t)}{\delta f(n_j(t))} \leq \Delta.
\end{equation} 
Substituting  \eqref{term_bound_potential} into \eqref{bound_first_case_potential_function}, it then follows that, 
\begin{equation}
\label{split_2}
\begin{split}
\Gamma (t)  & \leq \sum_{j \in \psi^* (t) \cap \psi(t)} \Delta - \sum_{j = n(t) - zM - \alpha}^{n(t)}\frac{1/(z+1)}{M/\beta n_j(t)} \\  & \leq \Delta |\psi^* (t)| - \frac{\beta zM (n(t)-\frac{zM}{2}-\frac{\alpha}{2})}{M(z+1)} \\  & \leq \Delta |\psi^* (t)| - \beta(\frac{1}{2}- \frac{\beta}{4})n(t), 
\end{split}
\end{equation}
where the second inequality is due to that $n_j(t) = j$ and the last inequality is because $zM + \alpha \leq \beta n(t)$ and $\frac{z}{z+1} \geq 1/2$. Based on \textbf{Case 1} and \textbf{Case 2}, we have: $\Gamma (t) \leq \Delta |\psi^* (t)| - \beta(\frac{1}{2}- \frac{\beta}{4})n(t)$. Thus, combining \eqref{dynamic_potential} and \eqref{bound_job_processing_optimal}, we have the following upper bound for the drift, $\frac{d\Lambda (t)}{dt}$: 
\begin{equation}
\label{upperbound_A_potential}
\begin{split}
\frac{d\Lambda (t)}{dt} & \leq \Gamma^* (t)  + \Gamma (t) \\ & \leq \Delta |\psi^* (t)|/\delta + \beta n(t)/\delta + \Delta |\psi^* (t)| - \beta(\frac{1}{2}- \frac{\beta}{4})n(t). \\ & = \frac{(\delta+1)\Delta}{\delta}|\psi^* (t)| + \frac{ \beta (1-\delta(\frac{1}{2} - \frac{\beta}{4}))}{\delta}n(t) \\ & \leq \frac{(\delta+1)\Delta}{\delta}|\psi^* (t)| - \frac{\epsilon \beta}{2\delta}n(t), 
\end{split}
\end{equation}
where the last inequality is due to $\delta = 2 + 2\beta + 2\epsilon$ and $\delta(\frac{1}{2}-\frac{\beta}{4}) \geq 1 + \frac{\epsilon}{2}$.

Based on \eqref{upperbound_A_potential}, we then have that, 
\begin{equation}
\label{final_theorem_3}
\begin{split}
 0 & = \Lambda(\infty) - \Lambda(0) \leq \int_{0}^{\infty}\frac{d\Lambda (t)}{dt} dt \\ & \leq
  \frac{(\delta+1)\Delta}{\delta}\int_{0}^{\infty} |\psi^* (t)| dt - \frac{\epsilon \beta}{2\delta} \int_{0}^{\infty} n(t)dt \\ & = \frac{(\delta+1)\Delta}{\delta} OPT - \frac{\epsilon \beta}{2\delta}LR,
\end{split}
\end{equation}
 where the first inequality is due to that there exist negative jumps during the evolution of $\Lambda(t)$.  This completes the proof of Theorem \ref{theorem_3}. \qed

\section{Numerical Studies}
\label{numerical-study}
In this section, we conduct several numerical studies to evaluate our proposed algorithms in both the multi-tasking and non-multi-tasking setting. As pointed out in \cite{trace-archive}, the Gamma distribution is a good fit for the failure model of most parallel and distributed computing systems. Therefore, we apply the Gamma distribution to generate machine service rates in a cluster with 100 machines over a period which lasts 100000 units of time. 

To be more specific, we categorize the service process of each machine into two classes, namely, the Available Period (AP) and Unavailable Period (UP). As depicted in Fig.~\ref{fluctuation}, each UP follows an AP. During an available period, the rate of the machine service capacity is uniformly distributed in $[2,3]$. On the other hand, when the machine is processing jobs in an unavailable period, its rate is uniformly distributed in $[0, 0.3]$. In addition, we apply the statistics of the trace data collected from a computational grid platform (see \cite{trace-archive}) to generate a series of available and unavailable periods for each machine independently. The length of an AP is Gamma distributed with $k=0.34$ and $\theta = 94.35$ where $k$ and $\theta$ are the shape parameter and scale parameter respectively. In contrast, the length of an UP is Gamma distributed with $k = 0.19$ and $\theta = 39.92$. We also normalize all the distributions such that the mean service rate is one.

\begin{figure}
\centering
\includegraphics[width=0.48\textwidth]{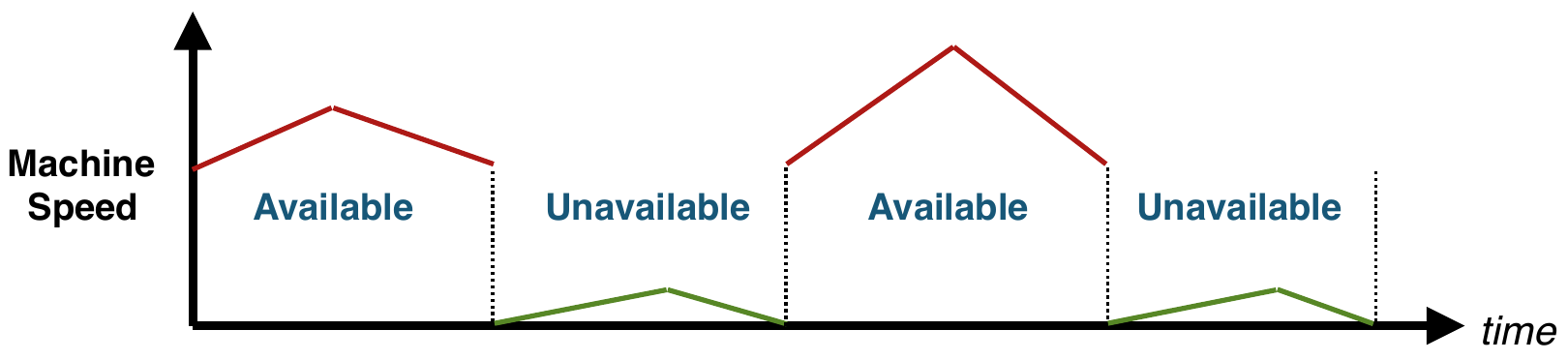}
\vspace{-.8em}
\caption{Fluctuation of machine service rates in different time periods.}
\label{fluctuation}
\vspace{-.6em}
\end{figure} 

\begin{figure}
\centering
\includegraphics[width=0.48\textwidth]{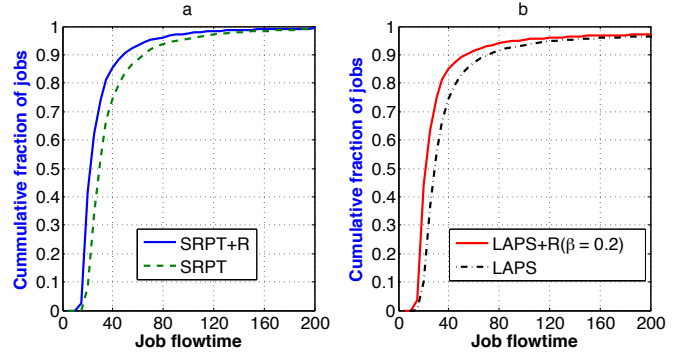}
\vspace{-0.8em}
\caption{The comparison between algorithms with and without redundancy. Panel a shows the CDF of the job flowtimes under both SPRT+R and SRPT. Panel b shows the CDF of the job flowtimes under LAPS+R($\beta$) with $\beta = 0.2$ and LAPS.}
\label{benefit_redundancy}
\vspace{-.6 em}
\end{figure}

In all the following evaluations, we consider time is slotted and the scheduling decisions are made at the beginning of each time slot. The jobs arrive at the cluster following a Poisson Process with rate $\lambda$ and the workload of each job is Pareto distributed as shown below. 
\begin{equation*}
\mbox{P}\{p_j \leq x\} = \left\{\begin{array}{cc}
1-(\frac{b}{x})^{\alpha} & \ x \geq b,\\
0 & \mbox{otherwise,}
\end{array}\right.
\end{equation*}
where $b = 20$ and $\alpha = 2$. It can be readily shown that the mean of the job workload is 40.

In the following simulations, we will compare the average as well as the cumulative distribution function (i.e., CDF) of job flowtimes for different algorithms.

\subsection{Benefit of Redundancy}
In this subsection, we implement scheduling algorithms with both redundancy and no redundancy to characterize the benefit of redundant execution. We set the job arrival rate $\lambda$ to one and depict the simulation results in Fig.~\ref{benefit_redundancy} and Fig.~\ref{average_flow_comparison}. As shown in Fig.~\ref{benefit_redundancy}, more than $85\%$ of jobs can complete within 40 units of time under SRPT+R. As a comparison, only $75\%$ of jobs complete within $40$ units of time under the SRPT scheme. It's worthy noting that this result also applies to LAPS+R($\beta$) and LAPS. Moreover, Fig.~\ref{average_flow_comparison} shows that, with redundancy, the average job flowtime can be reduced by nearly $25\%$ under all the scheduling algorithms.

\begin{figure}
\centering
\includegraphics[width=0.46\textwidth]{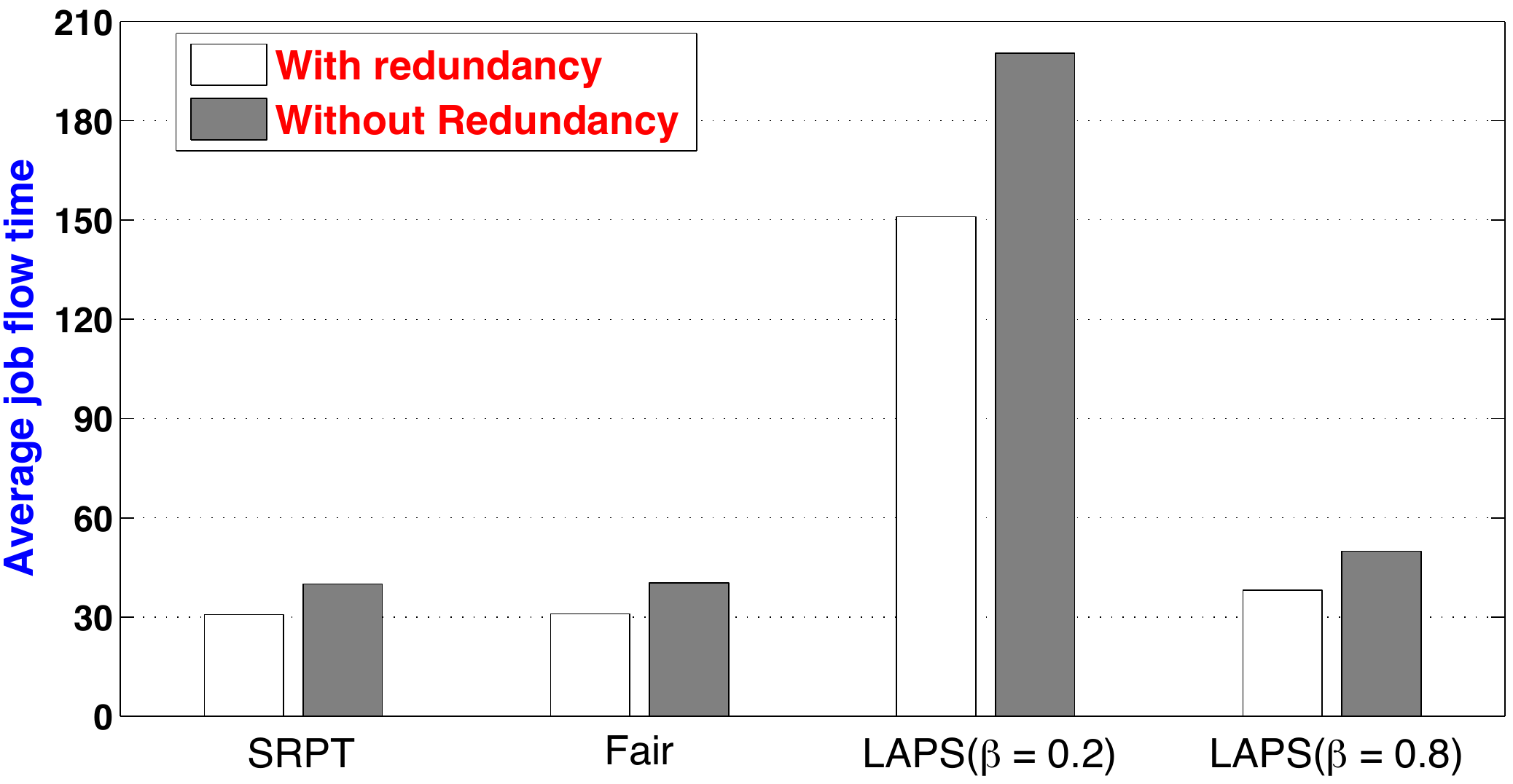}
\vspace{-0.8em}
\caption{Average job flowtime under different scheduling algorithms with and without redundancy.}
\label{average_flow_comparison}
\vspace{-1 em}
\end{figure}

\begin{figure}
\centering
\includegraphics[width=0.46\textwidth]{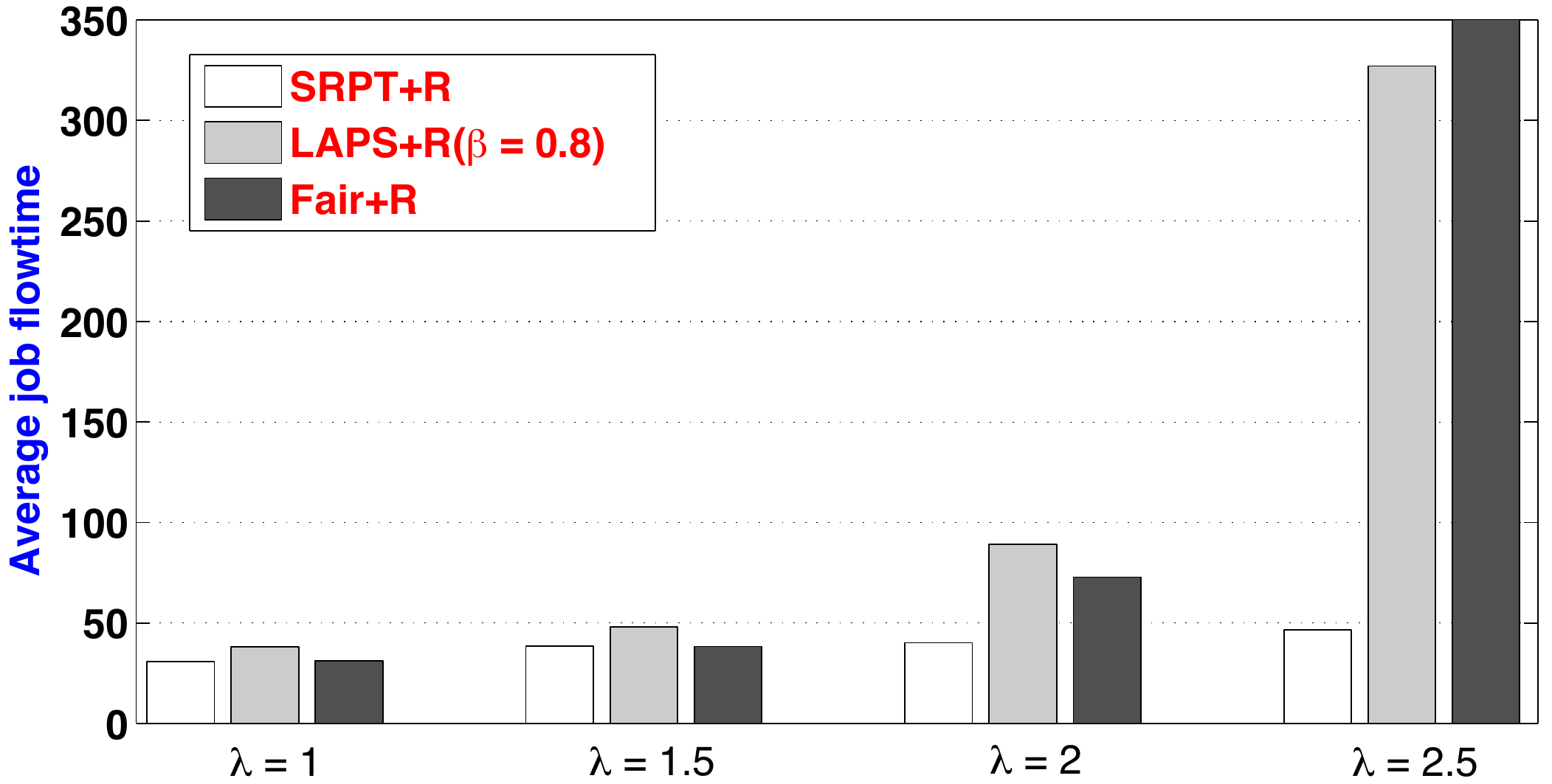}
\vspace{-1.2em}
\caption{Comparison between different algorithms in terms of average job flowtime for different $\lambda$.}
\label{average_flow}
\vspace{-1.2 em}
\end{figure}
\subsection{Comparison of various algorithms}
We conducted a more comprehensive comparison of various algorithms by tuning values of $\lambda$. Following the simulation parameters configured at the beginning of Section \ref{numerical-study}, we can readily show that, $\lambda = 2.5$ reaches the heavy traffic limit above which the system is overloaded. As such, we tune $\lambda$ from 1 to 2.5 in this simulation. Observe in Fig.~\ref{average_flow} that the average job flowtimes under SRPT+R and Fair+R are roughly the same under $\lambda = 1$ and $\lambda = 1.5$. However, when $\lambda$ increases, SRPT+R tends to perform much better than both LAPS+R($\beta$) and Fair+R. For $\lambda = 2$, the average job flowtime under both LAPS+R($\beta = .8$) and Fair+R is two times that under SRPT+R. More importantly, when $\lambda$ hits the heavy traffic limit, the average job flowtime under both LAPS+R($\beta$) and Fair+R increases significantly in $\lambda$ while it doesn't change much under SRPT+R. In addition, Fair+R outperforms LAPS+R($\beta$) when $\lambda$ is below 2. Conversely, if $\lambda$ is above $2$, LAPS+R($\beta$) performs better than Fair+R.

\subsection{The impact of $\beta$ in LAPS+($\beta$)}
Since $\beta$ has a high impact on the performance of LAPS+($\beta$), in this subsection, we tune the values of $\beta$ to illustrate the performance of LAPS+($\beta$) under different settings. 

We depict the comparison results under the heavy traffic regime where $\lambda = 2.5$ in Fig.~\ref{heavy_traffic_laps}. It shows that, when $\beta$ decreases, the number of jobs with small flowtime (less than 200 units of time) increases. Therefore, small jobs benefit more than large jobs under a small $\beta$ as they have higher priorities to be allocated resources in the cluster. In addition, when $\beta = .6$, the average job flowtime attains its minimum. 

As illustrated in Fig.~\ref{light_traffic_laps}, when $\lambda = 1$, almost all of the jobs in the cluster can complete within 200 units of time under different settings for $\beta$. When the job arrival rate is low, the jobs with small workloads can get large fractions of  shared resource under a small value of $\beta$. In this case, the benefit of redundancy is marginal and tuning down the value of $\beta$ does not help much for small jobs. However, in terms of the average job flowtime, a smaller $\beta$ leads to a worse performance. The reason behind is that a large job usually has a very small chance to obtain shared resource under a small value of $\beta$ in LAPS+R($\beta$) since it takes a long time to complete, resulting in a large flowtime. Though the number of large jobs is small, the amount of total job flowtime contributed but those large ones is significant.

\begin{figure}
\centering
\includegraphics[width=0.46\textwidth]{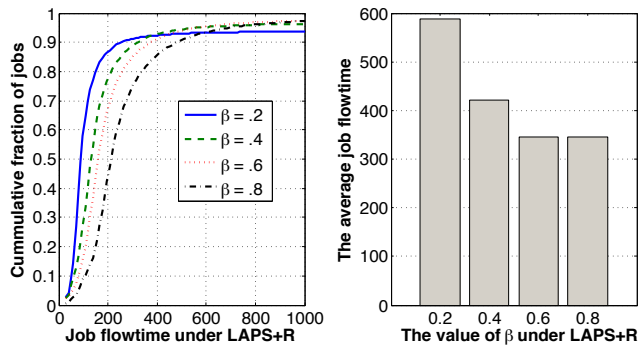}
\vspace{-0.8em}
\caption{The job flowtime under different $\beta$ in LAPS($\beta$) when $\lambda = 2.5$.}
\label{heavy_traffic_laps}
\vspace{ -1em}
\end{figure}

\begin{figure}
\centering
\includegraphics[width=0.46\textwidth]{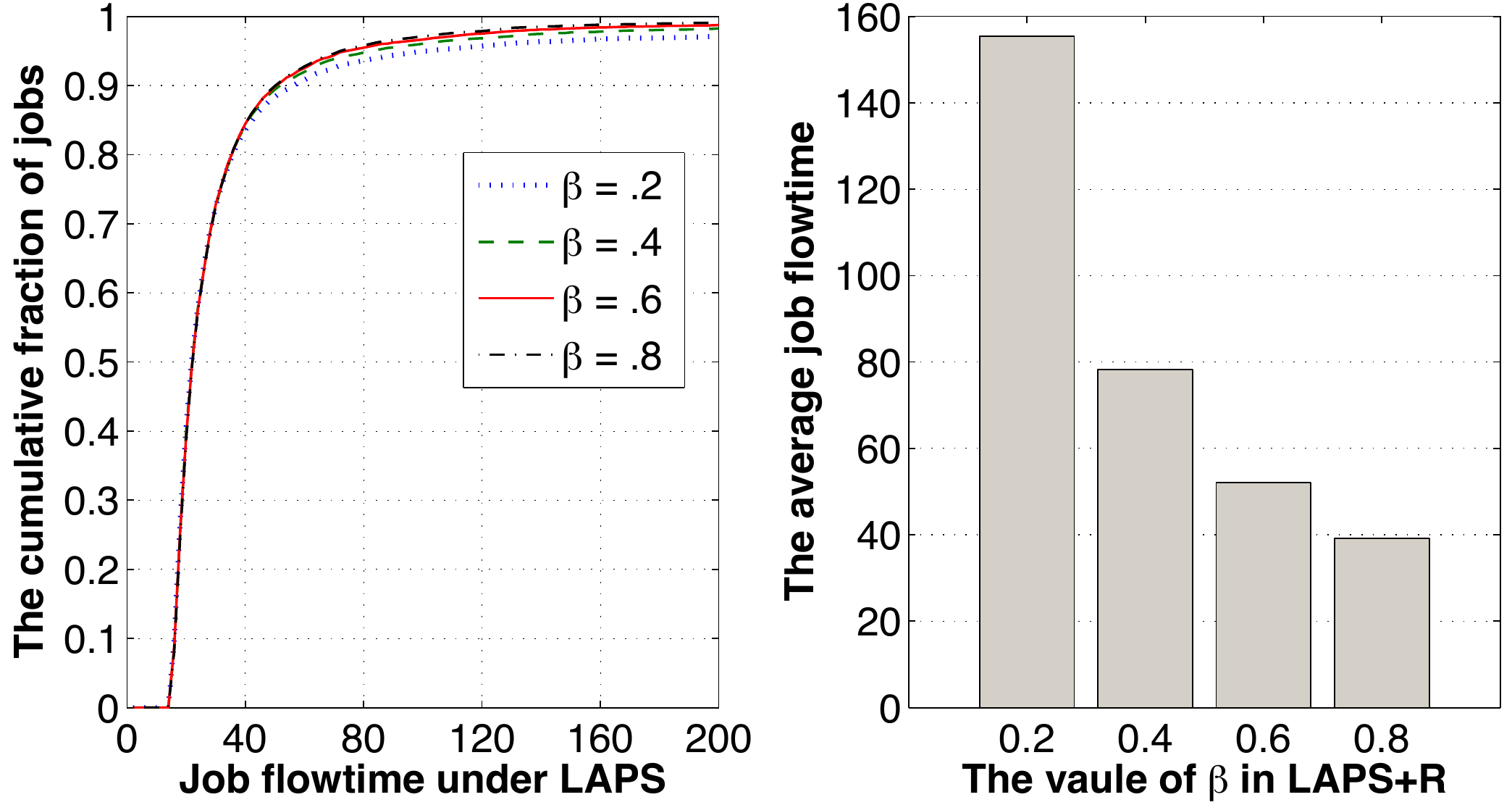}
\vspace{-0.1em}
\caption{The job flowtime under different $\beta$ in LAPS+R($\beta$) when $\lambda = 1$.}
\label{light_traffic_laps}
\vspace{-1.3 em}
\end{figure}

\section{Conclusions and Future Directions}
\label{conclusions}

This paper is an attempt to address the impact of two key sources of variability in parallel computing clusters: job processing times and machine processing rate. Our primary aim and contribution was to introduce a new speedup function account for redundancy, and provide the fundamental understanding on how job scheduling and redundant execution algorithms with limited number of checkpointings can help to mitigate the impact of variability on job response time. As the need of delivering predictable service on shared cluster and computing platforms grows, approaches, such as ours, will likely be an essential element of any possible solution. Extensions of this work to non-clairvoyant scenarios, the case of jobs with associated task graphs etc, are likely next steps towards developing the foundational theory and associated algorithms to address this problem.

\bibliographystyle{abbrv}
\bibliography{scheduling}  

\appendices

\section{Proof of Lemma \ref{lemma_approximation}}
\begin{proof}
Consider an optimal solution to OPT, ${\bm{y}^*}$, whose corresponding job completion time for job $j$ is denoted by $c^{*}_j$. Thus, for all $j=1,2,\cdots,N$, $\bm{y}^*$ and $c^{*}_j$ satisfy:
\begin{equation}
\int_{a_j}^{c^*_j}h_j(\bm{t_j}^*,\bm{r_j}^*,t)dt = p_j.
\end{equation} 
Moreover, it follows that $h_j(\bm{t_j}^*,\bm{r_j}^*,t) = 0$ for all $t \geq c_j^*$,  thus, we have that: 
\begin{equation}
 \int_{a_j}^{\infty}h_j(\bm{t_j}^*,\bm{r_j}^*,t)dt = p_j,
\end{equation}
and it follows that:
\begin{equation}
\begin{split}
& \int_{a_j}^{\infty}\frac{1}{p_j}({t - a_j}) h_j(\bm{t_j}^*,\bm{r_j}^*,t)dt = \int_{a_j}^{c_j^*}\frac{1}{p_j}({t - a_j}) h_j(\bm{t_j}^*,\bm{r_j}^*,t)dt  \\  & \leq \int_{a_j}^{c_j^*}\frac{1}{p_j}({c^*_j-a_j}) h_j(\bm{t_j}^*,\bm{r_j}^*,t)dt  = c^*_j - a_j.
 \end{split}
 \label{first_part}
\end{equation}
Following Lemma \ref{lemma_redundancy_progress}, it can be readily shown that $h_j(\bm{t_j}^*,\bm{r_j}^*,t) \leq \Delta$. Therefore, we have:
\begin{equation}
\label{second_part}
p_j = \int_{a_j}^{c^*_i}h_j(\bm{t_j}^*,\bm{r_j}^*,t)dt \leq \Delta (c^*_j - a_j).
\end{equation}
Combining \eqref{first_part} and \eqref{second_part}, we have:
$$\int_{a_j}^{\infty}\frac{({t-a_j + 2{p_j}})}{p_j} \cdot h_j(\bm{t_j}^*,\bm{r_j}^*,t)dt \leq (1 + 2\Delta) (c^*_j - a_j). $$
Since the optimal solution to OPT must be feasible for P1, it follows that:
\begin{equation}
\begin{split}
P1 & \leq \sum_{j=1}^{N}\int_{a_j}^{\infty}\frac{({t-a_j + 2{p_j}})}{p_j} \cdot h_j(\bm{t_j}^*,\bm{r_j}^*,t)dt \\ & \leq (1 + 2\Delta) \sum_{j=1}^{N}(c^*_j - a_j) = (1 + 2\Delta)OPT.
\end{split}
\end{equation}
This completes the proof. 
\end{proof}

\section{Proof of Lemma \ref{lemma_checking_fair}}
\begin{proof}
First, we have: 
\begin{equation}
\begin{split}
& \frac{1}{4(4+\epsilon)p_j}  \int_{a_j}^{c_j} \mathbbm{1}({n(t) < M}) \cdot \widetilde{h}_j(\bm{t}_j,\bm{x}_j,\bm{r}_{j},t) dt \\ & \leq \frac{1}{4(4+\epsilon)p_j}\int_{a_j}^{c_j}  \widetilde{h}_j(\bm{t}_j,\bm{x}_j,\bm{r}_{j},t)dt = \frac{1}{4}.
\end{split}
\end{equation}
Next, we proceed to show the following result holds. 
\begin{equation}
\label{equation_check_fea_proof}
\begin{split}
& \frac{\int_{a_j}^{c_j} \sum_{k:a_k \leq a_j} \mathbbm{1}(j \in A(\tau))  \cdot \mathbbm{1}(n(\tau) \geq M) \widetilde{h}_k(\bm{t}_{k},\bm{x}_{k},\bm{r}_{k},\tau) d\tau }{(4+\epsilon)Mp_j} \\  & \leq \frac{t-a_j}{p_j} +  \frac{1}{(4+\epsilon)M} n(t). 
\end{split}
\end{equation}
To achieve this, we divide the job set $\Psi_j = \{k:a_k \leq a_j\}$ into two separate sets: $\Psi^1_j = \{k: c_k \leq t \} \cap\Psi_j$ and $\Psi^2_j = \{k: c_k > t \} \cap\Psi_j$. For the first set, we have:
\begin{equation}
\label{check_feasbility_1}
\begin{split}
& \frac{\int_{a_j}^{c_j}\sum_{k: k \in \Psi^1_j} \mathbbm{1}(k \in A(\tau))  \cdot \mathbbm{1}(n(\tau) \geq M) \cdot \widetilde{h}_k(\bm{t}_{k},\bm{x}_{k},\bm{r}_{k},\tau) d\tau }{M}  \\  & \leq \frac{\int_{a_j}^{t}\sum_{k: k \in \Psi^1_j} \mathbbm{1}(k \in A(\tau))  \cdot \mathbbm{1}(n(\tau) \geq M) \cdot \widetilde{h}_k(\bm{t}_{k},\bm{x}_{k},\bm{r}_{k},\tau) d\tau}{M}.
\end{split}
\end{equation}

Based on the scheduling principle of Fair+R, it follows that:
\begin{equation}\sum_{k} \mathbbm{1}_{k \in A(t)}  \cdot \mathbbm{1}_{n(t) \geq M} \cdot \widetilde{h}_k(\bm{t}_k,\bm{x}_{k},\bm{r}_{k},t) \leq (4+\epsilon)M.
\vspace{-.3em}
\end{equation}
Therefore,  the L.H.S of \eqref{check_feasbility_1} is upper bounded by ${(4+\epsilon)(t-a_j)}$. For all jobs in $\Psi^2_j$, we have:
\begin{equation}
\label{check_feasbility_2}
\begin{split}
&  \int_{a_j}^{c_j}\sum_{k: k \in \Psi^2_j} \mathbbm{1}(k \in A(\tau))  \cdot \mathbbm{1}(n(\tau) \geq M) \cdot \widetilde{h}_k(\bm{t}_{k},\bm{x}_{k},\bm{r}_{k},\tau) d\tau \\  & =  \sum_{k: k \in \Psi^2_j} \int_{a_j}^{c_k} \mathbbm{1}(k \in A(\tau))  \cdot \mathbbm{1}(n(\tau) \geq M) \cdot \widetilde{h}_k(\bm{t}_{k},\bm{x}_{k},\bm{r}_{k},\tau)d\tau \\ & \overset{\mbox{(ii)}} \leq  \sum_{k: a_k \leq t \leq c_k \leq c_j} p_j \leq n(t)p_j,
\end{split}
\end{equation}
where (ii) is due to that, for any job who arrives before $j$, its amount of work processed between the range $[a_j,c_k]$ is upper bounded by $p_j$. 

Combining all inequalities above, the lemma immediately follows. This completes the proof. 
\end{proof}

\section{Proof of Lemma \ref{lemma_dual_performance} }
\begin{proof}
First, it can be readily shown that:
\begin{equation}
\label{bound_beta}
M\int_{0}^{\infty}\beta(t)dt = \frac{1}{4+\epsilon}n(t)dt = \frac{RF}{4+\epsilon}.
\end{equation}

Next, we proceed to show $\sum_{j=1}^{N}\alpha_j \geq \frac{RF}{4}$. To achieve this, we consider the following two cases:

\vspace{.5em}
\textbf{Case I: $n(t) \geq M$}. In this case, it's easy to verify that $\alpha_j(t) = 0$ for $j \leq l$ and $\alpha_j(t) = \frac{j-l}{kMp_j}$ for $l < j < n(t)$. Therefore, it follows that:
\begin{equation}
\sum_{j=1}^{N}\alpha_j(t) p_j  = \sum_{j = l+1}^{n(t)}\frac{j-l}{kM} = \frac{kM+1}{2} \geq \frac{n(t)}{4}
\end{equation}

\textbf{Case II: $n(t) < M$}. In this case, we have: $\widetilde{h}_j(\bm{t}_j,\bm{x}_j,\bm{r}_{j},t) \geq 4+\epsilon$ since we are using a resource augmentation of $(4+\epsilon)$-speed. Hence, the following equation holds:
\begin{equation}
\sum_{j=1}^{N}\alpha_j(t) p_j \geq \frac{1}{4}\sum_{j=1}^{n(t)} \mathbbm{1}(n(t) < M) = \frac{n(t)}{4}.
\end{equation}
\vspace{.5em}
As such, we have:
\begin{equation}
\label{bound_alpha}
\begin{split}
\sum_{j=1}^{N}\alpha_j p_j & = \sum_{j=1}^{N}\int_{a_j}^{c_j} \alpha_{j}(\tau) p_j d\tau = \int_{0}^{\infty} \sum_{j=1}^{N} \alpha_{j}(\tau) p_j d\tau \\  & \geq \frac{1}{4} \int_{0}^{\infty}  n(\tau) d\tau =  \frac{RF}{4}.
 \end{split}
\end{equation}
The result follows combining \eqref{bound_beta} and \eqref{bound_alpha}. This completes the proof. 
\end{proof}

\ifCLASSOPTIONcaptionsoff
  \newpage
\fi

\end{document}